\newcolumntype{C}{>{$}c<{$}}
\theoremstyle{plain}
\newtheorem{definitionenv}{Definition}
\newtheorem{lemmaenv}[definitionenv]{Lemma}
\newtheorem{theoremenv}[definitionenv]{Theorem}
\newtheorem{corollaryenv}[definitionenv]{Corollary}
\newtheorem{propositionenv}[definitionenv]{Proposition}
\newtheorem{conjectureenv}[definitionenv]{Conjecture}
\newtheorem{remarkenv}[definitionenv]{Remark}
\newenvironment{remark}{\begin{remarkenv}\rm}{\end{remarkenv}}
\newcommand{\br}{\begin{remark}}
	\newcommand{\er}{\end{remark}}
\newtheorem{exampleenv}{Example}
\newtheorem{app-lemmaenv}[section]{Lemma}
\newenvironment{definition}{\begin{definitionenv}\rm}{\end{definitionenv}}
\newenvironment{lemma}{\begin{lemmaenv}\rm}{\end{lemmaenv}}
\newenvironment{theorem}{\begin{theoremenv}\rm}{\end{theoremenv}}
\newenvironment{corollary}{\begin{corollaryenv}\rm}{\end{corollaryenv}}
\newenvironment{example}{\begin{exampleenv}\rm}{\end{exampleenv}}
\newenvironment{proposition}{\begin{propositionenv}\rm}{\end{propositionenv}}
\newenvironment{conjecture}{\begin{conjectureenv}\rm}{\end{conjectureenv}}
\newenvironment{app-lemma}{\begin{app-lemmaenv}\rm}{\end{app-lemmaenv}}
\newcommand{\bd}{\begin{definition}}
	\newcommand{\ed}{\end{definition}}
\newcommand{\bl}{\begin{lemma}}
	\newcommand{\el}{\end{lemma}}
\newcommand{\elp}{\hspace*{\fill} $\Box$
\end{lemma}}
\newcommand{\bt}{\begin{theorem}}
\newcommand{\et}{\end{theorem}}
\newcommand{\etp}{\hspace*{\fill} $\Box$
\end{theorem}}
\newcommand{\bc}{\begin{corollary}}
\newcommand{\ec}{\end{corollary}}
\newcommand{\ecp}{\hspace*{\fill} $\Box$
\end{corollary}}
\newcommand{\bcj}{\begin{conjecture}}
\newcommand{\ecj}{\end{conjecture}}
\newcommand{\be}{\begin{example}}
\newcommand{\ee}{\end{example}}
\newcommand{\eep}{\hspace*{\fill} $\Box$
\end{example}}
\newcommand{\bp}{\begin{proposition}}
\newcommand{\ep}{\end{proposition}}
\newcommand{\epp}{
\end{proposition}}
\newcommand{\bra}[1]{\langle#1|}
\newcommand{\ket}[1]{|#1\rangle}
\newcommand{\ketbra}[2]{|#1\rangle\langle#2|}
\newcommand{\tr}[1]{\text{tr}\left(#1\right)}
\newcommand{\eeq}{ \setcounter{equation} {\value{enumi}}}
\newcommand{\cG}{\mathcal{G}}
\newcommand{\cH}{\mathcal{H}}
\newcommand{\cO}{\mathcal{O}}
\newcommand{\cP}{\mathcal{P}}
\newcommand{\cS}{\mathcal{S}}
\newcommand{\bfc}{{\mathbf c}}
\newcommand{\bfi}{{\mathbf i}}
\newcommand{\bfj}{{\mathbf j}}
\newcommand{\bfr}{{\mathbf r}}
\newcommand{\bfs}{{\mathbf s}}
\newcommand{\bfv}{{\mathbf v}}
\newcommand{\bfw}{{\mathbf w}}
\newcommand{\mC}{{\mathbb C}}
\def\E{{\mathbb E}\,}
\def\tr{\textnormal{tr}}
\def\beq{\begin{equation}}
\def\eeq{\end{equation}}
\def\bean{\begin{IEEEeqnarray*}{rCl}}
\def\eean{\end{IEEEeqnarray*}}
\newcommand{\sgn}[1]{\textnormal{sgn}{\left(#1\right)}}
\newcommand{\ps}{{primary symplectic stabilizer}}
\newcommand{\pss}{{primary symplectic stabilizers}}
\newcommand{\iss}{{isotropic stabilizers}}
\newcommand{\Cl}{\textsf{Cl}}
\newcommand{\kT}{\ket{ \mathrm{+T}}}
\newcommand{\bT}{\bra{ \mathrm{+T}}}
\begin{document}
 
\title{  Learning quantum circuits of some $T$ gates}

\author{Ching-Yi Lai
           and~Hao-Chung Cheng 
           \thanks{ The work of	Ching-Yi Lai was supported by the Ministry of Science and Technology (MOST) in Taiwan, under Grant  MOST110-2628-E-A49-007.
           	The work of Hao-Chung Cheng  was supported partially from the Young Scholar Fellowship Program by MOST in Taiwan, under Grant MOST110-2636-E-002-009, and partially from the Yushan Young Scholar Program of the Ministry of Education in Taiwan, under Grant NTU-110V0904, and NTU-CC-111L894605.}
\thanks{Ching-Yi Lai is with the Institute of Communications Engineering, National Yang Ming Chiao Tung University, Hsinchu 30010, Taiwan (email:cylai@nycu.edu.tw).}
\thanks{Hao-Chung Cheng is with the Department of Electrical Engineering, Graduate Institute of Communication Engineering, National Taiwan University (NTU), Taipei 10617, Taiwan, and with the Department of Mathematics, Institute of Applied Mathematical Sciences, NTU, Taipei 10617, Taiwan, and with the Center for Quantum Science and Engineering, NTU, Taipei 10617, Taiwan, and also with Hon Hai (Foxconn) Quantum Computing Centre, New Taipei City 236, Taiwan.}
 }

\maketitle
 
\begin{abstract}
In this paper, we study the problem of learning an unknown quantum circuit of a certain structure. If the unknown target is an $n$-qubit Clifford circuit, we devise an efficient algorithm to reconstruct its circuit representation by using $O(n^2)$ queries to it. 
For decades, it has been unknown how to handle circuits beyond the Clifford group since  the stabilizer formalism cannot be applied in this case. Herein, we study quantum circuits of $T$-depth one  on the computational basis. 
We show that the output state of a $T$-depth one circuit  {\textit{of full $T$-rank}}  can be represented by a  stabilizer pseudomixture with a specific algebraic structure. Using Pauli and Bell measurements on copies of the output states, we can generate a hypothesis circuit that is equivalent to the unknown target circuit on computational basis states as input. 
If the number of $T$ gates of the target  is of the order $O({{\log n}})$, our algorithm requires  $O(n^2)$ queries to it and produces its equivalent circuit representation on the computational basis in time $O(n^3)$.
Using further additional $O(4^{3n})$ classical computations, we can derive an exact description of the target for arbitrary input states.
Our results greatly extend the previously known facts that stabilizer states can be efficiently identified based on the stabilizer formalism. 
\end{abstract}

\begin{IEEEkeywords}
Stabilizer formalism, Clifford circuits, $T$-depth, stabilizer pseudomixture
\end{IEEEkeywords}

\section{Introduction} 
\IEEEPARstart{C}{onsider} a quantum circuit $U$ that is accessible to us but its inner working or the mathematical description are unknown. Then, given a quantum state $\ket{\psi}$ as an input, what do we know about the circuit output?
The goal of this paper is   to find a circuit representation that resembles the functioning of $U$ so that we are able to predict the output  $U|\psi\rangle$ given an arbitrary input  $|\psi\rangle$.
Moreover, we would like to use as few queries to $U$ as possible.
We term such a problem \emph{learning an unknown quantum circuit}.

If we  focus on only the output state $U|\psi\rangle$ for a particular input state $|\psi\rangle$, the problem reduces to determining the state $U|\psi\rangle$.
This is also called \emph{quantum state tomography}, which is  one of the most crucial tasks in quantum information sciences \cite{Hra97, DPS03, BCG13, HWJ+17}.
Its goal is to \emph{infer} an unknown quantum state (assuming several copies of it are available) through   a sequence of quantum measurements  such that a proposed candidate state  \emph{performs well} in the future predictions.   
However, this is a non-trivial task.
In order to identify an unknown $n$-qubit quantum state, one would require exponentially many copies (in the number $n$) of the state during the tomography process to determine a full description of the state. This makes the task of tomography intractable in practice.

To mitigate such difficulties, at least two possible approaches are used.
Firstly, instead of fully characterizing the mathematical description of the unknown state, one might come up with a state that is \emph{probably approximately correct} (PAC) \cite{Val84} when only a particular set of measurements is of interest in the future predictions. In this case, Aaronson formulated the tomography of a quantum state as a learning problem and proved that it requires $O(n)$ copies of the state  to obtain a good hypothesis state \cite{Aar07} (see also \cite{CH16a, RAS+19, Aarr20, HK19, LC18, CD20} for the related works).
Secondly, one can focus on restricted states with a certain structure.
For example, stabilizer states are a class of states that play a substantial role in quantum error-correcting codes and other computational tasks \cite{NC00}. Aaronson and Gottesman provided a procedure to identify an unknown $n$-qubit stabilizer state with  $O(n)$ copies of it if collective measurements are possible~\cite{Got08}.
Later, Montanaro proposed an efficient algorithm via Bell sampling that consumes $O(n)$ copies of the state and runs in time of order $O(n^3)$ \cite{Mon17}.
Rocchetto cast the problem into the PAC learning model and showed that stabilizer states are \emph{efficiently PAC learnable} in the sense that the running time is polynomial in $n$ \cite{Roc18}.
(Note that the $O(n)$ number of copies is optimal by Holevo's theorem \cite{Hol73}.)

If now one aims to infer an unknown quantum evolution with certain known input states, this is called \emph{quantum process tomography}.
Once we completely know the  underlying evolution, we can determine the final states for arbitrary initial states. This is   the target problem we want to study in this paper.
Nevertheless, this problem is much more challenging than quantum state tomography.
The amount of resources needed for identifying an arbitrary $n$-qubit quantum circuit is $4^{3n}$, which is also practically formidable \cite{CN97, PCZ97, MRL08, SCP11}.
If a restricted class of quantum circuits are considered, such as \emph{Clifford circuits}   \cite{Got98}, 
Low showed that an $n$-qubit Clifford circuit $\mathsf{C}_n$ can be determined (up to a global phase) using $2n+1$ queries to $\mathsf{C}_n$ and $2n$ queries to its conjugate $\mathsf{C}_n^\dagger$ in time $O(n^2)$ \cite{Low09}. Moreover, a converse result showed that at least $n$ queries are required for such the task \cite{Low09}.
However, no concrete algorithms for reconstructing the circuit representation of the target $\mathsf{C}_n$ are provided, and one is not capable of predicting the output state of $\mathsf{C}_n$ on input an arbitrary quantum state. 
The first main contribution of this paper is to fulfill this gap.
Specifically, we propose a constructive algorithm to \emph{efficiently} produce the circuit representation of a target $n$-qubit Clifford circuit by using $O(n^2)$ queries to it. (See Theorem~\ref{theorem:Clifford} and Algorithm~\ref{Algorithm:I} in Section~\ref{sec:Clifford}.)
\begin{theorem}[Learning unknown Clifford circuits] \label{theorem:Clifford_intro}
	Given access to an unknown Clifford circuit $\mathsf{C}_n$, one can learn a circuit description  using  $2n^2+10n+4$ queries to it
	in time $O(n^3)$, so that the produced hypothesis circuit is equivalent to $\mathsf{C}_n$ with probability at least $1-2^{-n+1}$.
\end{theorem}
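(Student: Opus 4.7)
I would build on the classical fact that a Clifford unitary $\mathsf{C}_n$ is determined, up to a global phase, by its action on the Pauli generators, i.e., by the $2n$ conjugates $P_i := \mathsf{C}_n X_i \mathsf{C}_n^\dagger$ and $Q_i := \mathsf{C}_n Z_i \mathsf{C}_n^\dagger$ for $i = 1,\ldots,n$, which together form a symplectic basis of the $n$-qubit Pauli group modulo phase. The algorithm therefore splits into two parts: (i)~extract these $2n$ Paulis (with their signs) from oracle calls to $\mathsf{C}_n$, and (ii)~synthesize any Clifford circuit realizing the same symplectic transformation and the same signs.

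For the extraction step, the plan is to query $\mathsf{C}_n$ on stabilizer inputs whose output stabilizers expose the unknown Paulis one at a time. Concretely, I would first estimate $\mathrm{stab}(\mathsf{C}_n|0\rangle^{\otimes n}) = \langle Q_1,\ldots,Q_n\rangle$ and $\mathrm{stab}(\mathsf{C}_n|+\rangle^{\otimes n}) = \langle P_1,\ldots,P_n\rangle$, and then for each $i$ estimate the stabilizer group of $\mathsf{C}_n(|0\rangle^{\otimes(i-1)}|+\rangle|0\rangle^{\otimes(n-i)})$, which differs from the first only in that $Q_i$ is replaced by $P_i$, so that $P_i$ can be isolated by a set difference; a symmetric construction recovers each $Q_i$. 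Each output is a pure stabilizer state, so $O(n)$ copies of it, prepared by $O(n)$ queries to $\mathsf{C}_n$, identify its stabilizer group with failure probability exponentially small in $n$, via either single-copy Pauli sampling followed by Gaussian elimination over $\mathbb{F}_2$ or a Bell-sampling scheme along the lines of Montanaro. Summing over the roughly $2n$ stabilizer groups that must be learned, and adding $O(n)$ extra queries to pin down the $\pm 1$ signs of $P_i,Q_i$ and a constant overhead for base states, yields the stated $2n^2+10n+4$ query budget; a union bound over the $2n$ extractions gives the $1-2^{-n+1}$ success probability.

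Once the $2n$ Paulis $\{P_i,Q_i\}$ with the correct signs are in hand, the final step is to synthesize a Clifford circuit realizing them. I would invoke a standard symplectic-synthesis procedure: a Gauss--Bruhat-style reduction that applies $\cnot$, Hadamard, and phase gates as elementary row and column operations on the $2n\times 2n$ binary symplectic matrix encoding $\{P_i,Q_i\}$, followed by Pauli corrections for the signs. The reduction runs in $O(n^3)$ arithmetic operations over $\mathbb{F}_2$, and the resulting hypothesis circuit is then equal to $\mathsf{C}_n$ up to a global phase, i.e., equivalent as a physical operation.

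The main obstacle is the extraction step. Na\"ive state tomography would be exponentially expensive, so the crux is to leverage the stabilizer structure so that each oracle call returns one $\mathbb{F}_2$-linear constraint on the unknown Paulis, and to ensure that after $O(n)$ calls per state the constraints become linearly independent with probability $1-2^{-\Omega(n)}$. The bookkeeping for signs and the exact query accounting to match the announced constant $2n^2+10n+4$ will also require some care, and I expect this to be the most delicate part of the argument.
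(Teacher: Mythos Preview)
Your high-level plan is sound and close in spirit to the paper, but the extraction step differs in a way that matters for the exact constants in the theorem. The paper learns only \emph{two} stabilizer groups via Bell sampling: those of $\mathsf{C}_n\ket{0^n}$ and $\mathsf{C}_n\ket{+^n}$, costing $2(5n+2)$ queries and contributing the only two probabilistic steps (hence the clean $1-2\cdot 2^{-n}=1-2^{-n+1}$ bound). This yields generators $g_i,h_j$ which are \emph{unknown} $\mathbb{F}_2$-linear combinations $g_i=\mathsf{C}_nZ^{\mathbf b_i}\mathsf{C}_n^\dagger$, $h_j=\mathsf{C}_nX^{\mathbf a_j}\mathsf{C}_n^\dagger$. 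The crucial trick is that the change-of-basis matrices $B=[b_{ij}]$ and $A=[a_{ij}]$ are then read off \emph{deterministically}: measuring the already-known $g_i$ on $\mathsf{C}_n\ket{0^{j-1}10^{n-j}}$ gives outcome $(-1)^{b_{ij}}$ with certainty, so one query per bit suffices, for $2n^2$ queries total. Inverting $B,A$ then yields the exact $\mathsf{C}_nZ_i\mathsf{C}_n^\dagger$ and $\mathsf{C}_nX_i\mathsf{C}_n^\dagger$.

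Your plan instead re-learns the full stabilizer group of each of the $2n$ single-flip states. Even granting that the intersections you need (e.g.\ $\langle P_1,\dots,P_n\rangle\cap\mathrm{stab}(\mathsf{C}_n\ket{0^{i-1}{+}\,0^{n-i}})=\{I,P_i\}$) do isolate $P_i$ correctly, each such group costs $\Omega(n)$ copies (Montanaro uses $5n+2$), giving roughly $10n^2$ queries rather than $2n^2$, and $O(n)$ probabilistic steps rather than two, so the failure bound becomes $O(n\cdot 2^{-n})$ rather than $2^{-n+1}$. Thus your approach proves an $O(n^2)$-query, high-probability version of the theorem, but cannot hit the stated constants without replacing the per-flip stabilizer learning by the paper's deterministic Pauli-measurement step. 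This is precisely the ``exact query accounting'' you flagged as delicate; it is not just bookkeeping but requires this additional idea.
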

Let us emphasize that our approach does not require the access to the conjugate circuit $\mathsf{C}_n^\dagger$ (as it was required in \cite{Low09}). To implement such a conjugate  circuit might incur exponential overhead \cite{MSM19}, hence compromising the efficiency of a learning process. 
The problem of learning an unknown Clifford circuit is closely related to that of learning stabilizer states since the output of a Clifford circuit on input $\ket{0^n}$ is a stabilizer state.  In the proposed Algorithm~\ref{Algorithm:I}, we adopt Montanaro's Bell sampling algorithm for learning stabilizer states~\cite{Mon17} as a subroutine.
Moreover, we employ the \emph{stabilizer formalism} \cite{NC00, GK98} and exploit the desirable structure of the Pauli group to learn the output stabilizers states. 
Lastly, we show that a set of evolved Pauli basis can be identified by changing the input basis state appropriately
and thus we can determine the circuit representation of the unknown Clifford circuit $\mathsf{C}_n$ via a circuit synthesis procedure.

In the aforementioned task, we heavily rely on the stabilizer formalism.
However, it is unknown for a long time whether one can efficiently identify a quantum state that is produced from a quantum device beyond the class of Clifford circuits.
In this work, we  aim to provide an algorithm to identify the unknown quantum state  produced from a quantum circuit consisting of Clifford gates and a non-Clifford gate $T= \ket{0}\bra{0} + e^{i\pi/4} \ket{1}\bra{1}$.
This Clifford$+T$ gate set is universal for quantum computation and receives great attention  in  fault-tolerant quantum computation \cite{BK98,BK05}, compiling quantum circuits \cite{KMM13, Sel15, RS16}, and quantum circuit simulations \cite{BSS16, BG16}. 
An arbitrary quantum circuit can be approximately decomposed as a sequence of Clifford stages and $T$ stages, alternatively.
Here, a Clifford stage is simply a Clifford circuit.
In a $T$ stage, either a $T$ gate or the identity is applied to each qubit\footnote{We remark that $T^\dag$ can also be used here but $T$ and $T^\dag$ are equivalent up to a Clifford gate ($T^\dag=TS^\dag$). For our purpose,  we consider only $T$ gates.}.
The number of $T$ stages in a circuit is called the \textit{$T$-depth} of the circuit.
 
	A slightly related work is that	the quantum circuits in the Clifford hierarchy can be distinguished by some quantum measurements~\cite[Theorem 8]{Low09}.
However, no exact construction of the quantum measurements is given, other than the first level of Clifford hierarchy, namely the Pauli group, which can be identified using the idea of superdense coding~\cite{BW92}.
We note that quantum circuits of $T$-depth one have been studied in~\cite{Sel13}.

Our second main contribution is   as follows.  {
Consider an unknown $n$-qubit  $T$-depth one quantum circuit $U$ of full $T$-rank (see Def.~\ref{def:trank}) 
such that the $X$ part of its Pauli frame prior to the $T$ stage, restricted on the support of the $T$ gates, is of full rank. If $U$ has $O(\log n)$  $T$ gates,} we show that it requires at most $O(n^2)$ queries to efficiently learn the output of $U$ on input a computational basis state.

	\begin{theorem}[Learning unknown $T$-depth one circuits on the computational basis]
		Given access to an unknown $T$-depth one quantum circuit $U$   {of full $T$-rank}, one can learn a circuit description using  {$O(3^k n)$} queries to the unknown circuit $U$ with time complexity  {$O(n^3+3^kn)$}, where $k\leq n$ is the number of $T$ gates, so that
		the produced hypothesis circuit $\hat{U}$ is equivalent to $U$ with probability at least
		$1-3e^{-n}$ when the input states are restricted to the computational basis. 
	\end{theorem}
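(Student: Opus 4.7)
The plan is to decompose $U = \mathsf{C}_2\,(T^{\otimes S}\otimes I^{\otimes (n-k)})\,\mathsf{C}_1$ with Cliffords $\mathsf{C}_1,\mathsf{C}_2$ and a subset $S\subseteq[n]$ of $k$ qubits receiving $T$, and then reconstruct each factor from queries on computational basis inputs. For an input $\ket{x}$ the intermediate state $\mathsf{C}_1\ket{x}$ is a stabilizer state $\ket{\phi_x}$; using the identity $T=\alpha I+\beta Z$ with $\alpha=e^{i\pi/8}\cos(\pi/8)$ and $\beta=-i e^{i\pi/8}\sin(\pi/8)$ (so $|\alpha|^2+|\beta|^2=1$), I would write
\begin{equation}
    U\ket{x}=\sum_{y\in\{0,1\}^k}\alpha^{k-|y|}\beta^{|y|}\,\mathsf{C}_2\,Z_S^{y}\ket{\phi_x},
\end{equation}
so that $U\ket{x}$ is a stabilizer pseudomixture of $2^k$ components whose stabilizer groups share a rank-$(n-k)$ common subgroup: a Pauli $g$ stabilizes every component iff $\mathsf{C}_2^\dagger g \mathsf{C}_2$ commutes with $Z_S^y$ for all $y$ (i.e., has $I/Z$ support on $S$) and $\mathsf{C}_1^\dagger\mathsf{C}_2^\dagger g \mathsf{C}_2 \mathsf{C}_1$ fixes $\ket{x}$.

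The first concrete step is to recover this common subgroup via Pauli and Bell measurements on copies of $U\ket{x}$, reusing the Clifford-learning subroutine behind Theorem~\ref{theorem:Clifford_intro}. Each common-subgroup element has expectation $\pm1$ on $U\ket{x}$, while the off-diagonal pseudomixture contributions force non-common Paulis to have expectations bounded away from $\pm1$ by a gap whose inverse scales like $3^k$ in the worst case---this is the source of the factor $3^k$ in the sample count. Repeating the sampling for the $n$ inputs $\ket{x\oplus e_i}$, $i=1,\ldots,n$, produces enough evolved-Pauli relations to pin down the symplectic action of $\mathsf{C}_2\mathsf{C}_1$ on the $n-k$ directions unaffected by the $T$ layer.

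The second step is to isolate $S$ and reconstruct the $T$ stage: the qubits on which $T$ acts manifest as the directions missing from the common subgroup, which I would identify by searching for the Paulis $h_j=\mathsf{C}_2 Z_j \mathsf{C}_2^\dagger$, $j\in S$, that non-trivially interleave the $2^k$ pseudomixture components (their expectations on $U\ket{x}$ interpolate between the $\pm1$ and noisy regimes in a manner characteristic of $|\alpha|^2,|\beta|^2$). Once the common subgroup and $\{h_j\}_{j\in S}$ are known, I would feed them into the circuit-synthesis subroutine of Algorithm~\ref{Algorithm:I} to produce Cliffords $\hat{\mathsf{C}}_1,\hat{\mathsf{C}}_2$ consistent with the measurements and declare $\hat{U}:=\hat{\mathsf{C}}_2(T^{\otimes S}\otimes I^{\otimes(n-k)})\hat{\mathsf{C}}_1$; by construction $\hat{U}\ket{x}=U\ket{x}$ for every computational basis state $\ket{x}$.

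The main obstacle is the combinatorial blow-up from the $T$ layer: the $2^k$ pseudomixture components interfere and dilute the Pauli signal that would, in the purely Clifford case, be crisp, so each estimator has signal-to-noise shrinking with $k$. Controlling the overall failure probability at $3e^{-n}$ then reduces to applying Chernoff bounds to $O(n)$ samples per generator and union-bounding over the $O(n)$ generators together with the $O(3^k)$ distinguishable $T$-branches, forcing the $O(3^k n)$ query count; the $O(n^3)$ time bound is inherited from the Clifford synthesis step of Theorem~\ref{theorem:Clifford_intro}, while the additive $O(3^k n)$ accounts for processing the amplified measurement data.
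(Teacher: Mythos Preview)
Your high-level plan---find the Paulis that genuinely stabilize $U\ket{x}$, locate the $T$-perturbed directions, then synthesize---matches the paper's, but the mechanisms you propose for the two hard steps are either wrong or missing. First, the expansion $U\ket{x}=\sum_y \alpha^{k-|y|}\beta^{|y|}\mathsf{C}_2 Z_S^y\ket{\phi_x}$ is a state-vector decomposition, not a stabilizer pseudomixture (which is a real affine combination of stabilizer \emph{density matrices}); the paper's $3^k$-term pseudomixture arises from writing each $\kT\bT$ as a combination of three stabilizer states (Lemma~\ref{lemma:3tok}), and it is precisely this structure that yields $n-k$ isotropic stabilizers together with $2k$ \emph{primary symplectic stabilizers} and the clean measurement statistics of Lemmas~\ref{lemma:T1_state} and~\ref{lemma:Pauli_measurement_T}. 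Second, your attribution of the $3^k$ cost to a shrinking signal gap is incorrect: the gap between the Pauli expectation of an isotropic stabilizer ($\pm 1$) and that of a single primary symplectic stabilizer ($\pm 1/\sqrt{2}$), or between the latter and a product of two ($\pm 1/2$), is a \emph{constant} independent of $k$. The $3^k$ appears because Bell sampling on $\ket{\psi}\otimes\ket{\psi}$ (Lemma~\ref{lemma:findbasis2}) only hands you $n+k$ generators $g_1',\ldots,g_n',h_{n-k+1}',\ldots,h_n'$ satisfying~(\ref{eq:commutation}), and the genuine primary symplectic stabilizers are hidden as products inside $\{I,g_{n-k+1}',h_{n-k+1}'\}\times\cdots\times\{I,g_n',h_n'\}$; Step~(4) of Algorithm~\ref{Algorithm:II} tests each of these $3^k$ candidates with $O(n)$ Pauli measurements and a constant-gap Chernoff bound. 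Your proposal gives no route to this finite candidate list and no search over it.

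Separately, reconstructing $\mathsf{C}_1$ and $\mathsf{C}_2$ individually is neither possible (the factorization is far from unique) nor necessary, and the operators $h_j=\mathsf{C}_2 Z_j\mathsf{C}_2^\dagger$ you single out are the wrong ones: $T$ commutes with $Z$, so the $Z$-directions on $S$ are untouched, while the primary symplectic stabilizers come from the $X$ and $Y$ directions after the $T$ layer. The paper instead synthesizes a single Clifford $C$ sending $X_i$ and $Y_j$ to the recovered generators (Lemma~\ref{lemma:circuit_synthesis}) and outputs $\hat U=C\circ(I^{\otimes(n-k)}\otimes T^{\otimes k})\circ S^{\bfs}\circ H^{\otimes n}$; agreement on every $\ket{\bfv}$, $\bfv\in\{0,1\}^n$, then follows because the expanded Pauli frames of $U\ket{\bfv}$ and $U\ket{0^n}$ differ only in signs, so no additional queries on shifted inputs $\ket{x\oplus e_i}$ are needed.
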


The explicit procedure is provided in Algorithm \ref{Algorithm:II} of Section~\ref{sec:learning_T}.
The reason why learning a quantum circuit $U$ of some $T$ gates is  much more technical-demanding  is elaborated in the following.
Since the $T$ gate does not belong to the Clifford group,  Pauli operators are not preserved under the evolution of  $U$, and hence the stabilizer formalism or the Gottesman--Kitaev theorem~\cite{GK98} does not work for circuits with $T$ gates. To circumvent such challenges, we need to conceive a scenario such that the Gottesman--Kitaev theorem can be leveraged for our purpose. 
Since a $T$ gate can be implemented by a $T$ gadget with the \emph{magic state} $\kT\bT$,
it yields two corresponding \emph{stabilizer pseudomixture} representations. Hence, a  quantum circuit can be transformed to a Clifford circuit with some ancillary magic states $\kT\bT$ and postselection~(see Proposition~\ref{prop:tgadget} of Section~\ref{sec:some_T}).
We remark that the stabilizer pseudomixtures have been studied in some contexts, such as robustness of magic~\cite{HC17}, classical and quantum simulation~\cite{BSS16,RLCK19}. 
In this work, we will exploit the structure of stabilizer pseudomixtures in learning quantum circuits.
We propose an \emph{expanded stabilizer formalism}, which serves as a crucial and convenient tool to analyze quantum circuits of some $T$ gates (see Section~\ref{sec:some_T}).
A $T$-depth one circuit can be represented by the expanded stabilizer formalism (Lemma~\ref{lemma:3tok} of Section~\ref{sec:some_T}).
In particular, this expanded stabilizer formalism can be generated by at most $2n$ \pss.
Consequently, we are able to analyze the outcomes of Pauli and Bell measurements on copies of its output state on input $\ket{0^n}$ (Lemmas~\ref{lemma:T1_state} and \ref{lemma:BellMeasure3} of Section~\ref{sec:learning_T}).
As a result, we recover a set of basis output states, which in turn reproduces a quantum circuit that agrees with the target circuit on  the computational basis (Theorem~\ref{theo:T} of Section~\ref{sec:learning_T}).
We remark that no conjugate oracle to $U$ is required in our algorithm. This is done after we carefully analyze the structure of such full-rank $T$-depth one output states (Lemma~\ref{lemma:Pauli_measurement_T} of Section~\ref{sec:learning_T}).

 Finally, we comment on the task of explicitly learning an unknown  {full $T$-rank} quantum unitary $U$ of $k$ $T$ gates at depth one, which generally requires an exponential number of computing resources. 
	Based on Algorithm~\ref{Algorithm:II}, we are able to find a description of $U\ket{\bfv}$ for $\bfv\in\{0,1\}^n$, using $O(3^k n)$ queries to $U$. Then we may compute  $U\ket{\bfv}\bra{\bfw}U^\dag$ for any $\bfv,\bfw$ to derive a full description of $U$,  {using $O(4^{3n})$ classical computations} (see Corollary~\ref{cor:tcircuit}). 	In contrast,  an exponential number of queries to $U$ are generally required in quantum process tomography.
   Our results are summarized in Table~\ref{table:summary}.

This paper is organized as follows. We provide preliminaries in Section~\ref{sec:preliminaries}. 
Section~\ref{sec:Clifford} is devoted to learning unknown Clifford circuits.
In Section~\ref{sec:some_T}, we propose an expanded stabilizer formalism for analyzing quantum circuits of some $T$ gates.
The learning algorithm for quantum circuits of full-rank $T$-depth one output state is given in Section~\ref{sec:learning}.
Then we conclude in Section~\ref{sec:conclusions}.

\begin{table*}[th!]
	\centering
	\begin{tabular}{c|c|c}  
		\hline
		\hline
		Target & Sample complexity & Time complexity \\
		\hline	
		Stabilizer states & $5n+2$ \cite{Mon17} & $O(n^3) $ \\
		
		\arrayrulecolor{gray!35}\hline
		
		Clifford circuits   & $2n^2+10n+4$   (Algorithm~\ref{Algorithm:I}) &  $O(n^3) $ \\

		\arrayrulecolor{gray!35}\hline
		
		full-rank $T$-depth one states   & $O(3^k n) $ (Algorithm~\ref{Algorithm:II}) & $O(n^3 +3^k n) $ \\
		
		\arrayrulecolor{gray!35}\hline
		
		 $T$-depth one circuits of full $T$-rank&  \multirow{2}{*}{ {$O(3^k n) $ Theorem~\ref{theo:T}}}  & \multirow{2}{*}{ {$O(n^3 +3^k n) $}} \\
		 {(on computational basis)} &  & \\
		
		\arrayrulecolor{gray!35}\hline
		
	  $T$-depth one circuits of full $T$-rank&  \multirow{2}{*}{ {$O(3^k n) $ Corollary \ref{cor:tcircuit}}}  & \multirow{2}{*}{ {$O(4^{3n}) $}} \\
		 {(on arbitrary inputs)} &  & \\
		
		\arrayrulecolor{gray!35}\hline
		
		$T$-depth hierarchies & $?$ & $?$ \\	
		\arrayrulecolor{black}\hline
		\hline		
	\end{tabular}
	\vspace{0.5em}
	\caption{Comparison of various $n$-qubit targets (and $k$ $T$ gates).		We only list the best upper bounds so far.
	}	\label{table:summary}
\end{table*}

\section{Preliminaries} \label{sec:preliminaries}

\subsection{Pauli operators and Clifford gates}
A pure quantum state, denoted by $\ket{v}$, is a unit vector in a certain Hilbert space.
Let $\{\ket{0},\ket{1}\}$ be an ordered basis (\textit{computational basis}) for pure single-qubit states in $\mC^2$. 
The Pauli matrices
\begin{align*} &\sigma_{00}=I=\begin{bmatrix}1 &0\\0&1\end{bmatrix},\ \sigma_{01}=X=\begin{bmatrix}0 &1\\1&0\end{bmatrix},\\
	&\sigma_{10}=Z=\begin{bmatrix}1 &0\\0&-1\end{bmatrix},\ \sigma_{11}=Y=\begin{bmatrix}0 &-i\\i&0\end{bmatrix}=iXZ
\end{align*}
form a basis of the space of linear operators $L(\mC^2)$. 
An important fact is that $X,Y,$ and $Z$ anticommute with each other.
Note that we may sometimes refer to $I$ as the identity matrix of appropriate dimension without ambiguity. 
The states $\ket{0}$ and $\ket{1}$ are the  eigenvectors of $Z$ and thus $\{\ket{0},\ket{1}\}$ is also called \textit{$Z$ basis}.
Similarly, the eigenvectors of $X $ are $\{\ket{+}=(\ket{0}+\ket{1})/\sqrt{2},$  $\ket{-}=(\ket{0}+\ket{1})/\sqrt{2} \}$, called \textit{$X$ basis},
and  the eigenvectors of $Y $ are $\{\ket{+i}=(\ket{0}+i\ket{1})/\sqrt{2},$ $\ket{-i}=(\ket{0}-i\ket{1})/\sqrt{2} \}$, called \textit{$Y$ basis}.

Associated with an $n$-qubit quantum system is a complex Hilbert space $\mC^{2^n}$.
A standard basis for linear operators on the $n$-qubit state space is the $n$-fold Pauli group, denoted by  $${\mathcal{P}}_n=\{c E_1\otimes  \cdots \otimes E_n: c\in\{\pm 1,\pm i\}, E_j\in\{I,X,Y,Z\} \}.$$
All the elements in ${\cP}_n$ are unitary with eigenvalues $\pm 1$ and they either commute or anticommute with each other. 
For convenience, we  may sometimes omit the symbol of tensor product $\otimes$.

An $n$-fold Pauli operator admits a binary representation that is irrelevant to its phase. For   $\bfw= w_1\cdots w_n,\bfv= v_1\cdots v_n\in \{0,1\}^n$, define
\[
\sigma_{\bfw:\bfv}\triangleq \bigotimes_{j=1}^n \sigma_{w_jv_j},
\]
where $\bfw:\bfv$ denote the concatenation of the two vectors $\bfw$ and $\bfv$.
Consequently, $\cP_n$ can be generated by $2n$ independent Pauli operators up to a phase in $ \{\pm 1,\pm i \}$.

The set of $n$-qubit \textit{Clifford circuits}, denoted by $\Cl_n$, consists of unitary operators that preserve the $n$-fold Pauli group under conjugation
\[
\Cl_n=\left\{V\in U(\mC^{2^n}):  V \cP_n V^{\dag}=\cP_n \right\}.
\]
Clifford circuits  are composed of Hadamard $H=\frac{1}{\sqrt{2}}\begin{bmatrix}1&1\\1&-1\end{bmatrix}$, 
phase $S=\begin{bmatrix}1&0\\0&i\end{bmatrix}$, and controlled-NOT $CNOT=\ketbra{0}{0}\otimes I+ \ketbra{1}{1}\otimes X$~(see, for example, \cite{NC00}) 
and these gates are called \emph{Clifford gates}.

We may use the notation $X_j$ to denote an operator that applies an $X$ to the $j$th qubit but trivially operates on the others.
For $\bfv\in\{0,1\}^n$, let $X^\bfv=  \prod_{i=1}^n X_{\bfv_i}$.
Other operators, such as $Y_j$, $Y^\mathbf{v}$, $S_j$, and $S^\bfv$, are similarly defined.
Thus 
any  $g\in {\mathcal{P}}_n$ can be expressed as $g= c Z^{\bfw}X^{\bfv}$ for some $ c \in \{\pm 1,\pm i \}$ and $\bfw,\bfv\in \{0,1\}^n$.

Without loss of generality, a basis of $\cP_n$ can be represented  as follows:
\begin{align}
	\begin{array}{cc}
		g_1= UX_1U^\dag,&  h_1= UZ_1U^\dag,\\
		g_2= UX_2U^\dag,&  h_2= UZ_2U^\dag,\\
		\vdots &\vdots\\
		g_n= UX_nU^\dag,&  h_n= UZ_nU^\dag,
	\end{array}  
\end{align}
where $U\in\Cl_n$ is a Clifford unitary. They satisfy the following commutation relations: \begin{align}
	\begin{array}{cc}
		g_ig_j= g_jg_i, &h_ih_j=h_jh_i,\\ g_ih_j=h_jg_i \mbox{ for }i\neq j, &g_ih_i= -  h_ig_i.
	\end{array}\label{eq:commutation}\end{align}
For a set of operators $\{g_1',\dots,g_n',h_1',\dots, h_n'\}$ satisfying the commutation relations~(\ref{eq:commutation}), the operators $g_i'$ and $h_i'$ are called    \textit{symplectic partners} of  each other.

\section{Learning unknown Clifford circuits}\label{sec:Clifford}
A Clifford circuit can be decomposed into Clifford gates in many ways.
This section is devoted to providing an efficient algorithm for finding a circuit representation of an unknown Clifford circuit.
In subsection~\ref{sec:stabilizer}, we recall how to retrieve information about an unknown stabilizer state by measurements. In subsection~\ref{sec:circuit_synthesis}, we describe a circuit synthesis method for Clifford circuits.
Lastly, in subsection~\ref{sec:learn_Clifford}, we add up the introduced tools to achieve our goal of learning an unknown Clifford circuit (Algorithm~\ref{Algorithm:I} and Theorem~\ref{theorem:Clifford}).

\subsection{Stabilizer states and measurements} \label{sec:stabilizer}
An $n$-qubit \emph{stabilizer state} $\ket{\phi}$ is the joint eigenvector of an Abelian group $\cS=\langle g_1,g_2,\dots, g_n\rangle \subset\cP_n$ that does not contain $-I^{\otimes n}$,
where $g_i\in\cP_n$ are independent  generators.
Any element $g$ in $\cS$ satisfies that $g\ket{\phi}=\ket{\phi}$ and is called a  \emph{stabilizer} of $\ket{\phi}$.
Suppose that $g_i= c_i \sigma_{\bfw_i:\bfv_i}$, where $c_i\in \{\pm1\}$ and $\bfw_i,\bfv_i\in\{0,1\}^n$, for $i=1,\dots,n$. Then 
a \textit{Pauli frame}   of the $n$-qubit stabilizer state $\ket{\phi}$ is given by
\[
\left(
\begin{array}{ccccc|cccc}
	\sgn{c_1}& w_{11}& w_{12}& \cdots & w_{1n}&v_{11}& v_{12}& \cdots & v_{1n}\\ 
	\sgn{c_2}& w_{21}& w_{22}& \cdots & w_{2n}&v_{21}& v_{22}& \cdots & v_{2n}\\ 
	\vdots &\vdots&\vdots&\ddots&\vdots &\vdots&\vdots&\ddots&\vdots \\
	\sgn{c_n}& w_{n1}& w_{n2}& \cdots & w_{nn}&v_{n1}& v_{n2}& \cdots & v_{nn}
\end{array}
\right),
\]
where $\sgn{c_i}=\pm$ for $c_i=\pm 1$.
\be
The two-qubit state $\left(\ket{00}-\ket{11}\right)/\sqrt{2}$ has a Pauli frame
$
\begin{pmatrix}
	-&0&0&1&1\\&1&1&0&0
\end{pmatrix}.
$
(The plus sign is omitted.)
\eep

\bd
A Pauli frame $ [\bfc\ W| V]$, where $\bfc\in\{+,-\}^n$, $W,V\in\{0,1\}^{n\times n}$ is said to have \textit{$X$-rank} $k$ if the binary matrix $V$ is of rank $k$. 
\label{def:xrank}.
\ed

It is known that the evolution of a Clifford circuit on a stabilizer state can be simulated by tracking the transformation of its Pauli frame according to the Gottesman--Kitaev theorem~\cite{GK98}.
Measuring a Pauli operator $g\in\cP_n$ on a stabilizer state $\ket{\phi}$ can also be tracked in a Pauli frame~\cite{NC00}.

An $n$-qubit stabilizer state $\ket{\phi}$ and its stabilizer group $\cS$ have a one-to-one correspondence. Hence $\ket{\phi}$  can be identified by its stabilizer group $\cS$. A stabilizer group can be described by a set of independent Pauli generators.
Assuming that many copies of $\ket{\phi}$ are available, we can   perform certain measurements on $\ket{\phi}$ and obtain a representation of $\cS$.
A naive method is to measure each Pauli operator $g\in\cP_n$ on a copy of $\ket{\phi}$.
The measurement returns outcome $+1$ with probability 1 if $g$ stabilizes $\ket{\phi}$
and returns outcome $+1$ $(-1)$ with probability $1/2$ ($1/2$), otherwise.
This requires $O(4^n)$ copies of the state for Pauli measurements.

Montanaro showed that Bell sampling on two copies of a stabilizer state returns one of its stabilizers, up to a Pauli operator that relates the stabilizer state and its \textit{conjugate state}~\cite[Lemma 2]{Mon17}.
Consequently  $O(n)$ outcomes of  Bell sampling on  pairs of the stabilizer state return $O(n)$ stabilizers of the state, which can then be used to determine a set of independent stabilizer generators with high probability.

For an arbitrary quantum state $\ket{\psi}=\sum_{\textbf{i}\in\{0,1\}^n}a_\mathbf{i}\ket{\mathbf{i}}\in \mathbb{C}^{2^n}$, its conjugate state is defined by 
\begin{align}
	\ket{\psi^*}=\sum_{\mathbf{i}}a_\mathbf{i}^*\ket{\mathbf{i}}.
\end{align} 
Inspired by Montanaro's Bell sampling~\cite[Lemma 2]{Mon17}, we reformulate the following lemma.
\bl \label{lemma:BellMeasure}
Suppose that $\ket{\psi}\in \mC^{2^n}$ is an $n$-qubit pure state. Then a joint Bell measurement on $\ket{\psi^*}\otimes \ket{\psi}$ returns outcome $\bfr\in\{0,1\}^{2n}$ with probability
\[
\frac{\left|\bra{\psi}\sigma_{\bfr}\ket{\psi}\right|^2}{2^n}.
\]

\el
\begin{proof}
	The outcome of a joint Bell measurement on  $\ket{\psi^*}\otimes \ket{\psi}$ is $\sigma_\bfr$
	with probability 
	\begin{align*}
		&\left| \bra{\Phi_+}^{\otimes n} (I^{\otimes n}\otimes \sigma_\bfr)  \ket{\psi^*}\ket{\psi}   \right|^2\\
		=&\frac{1}{2^n}\left| \sum_{\bfi\in\{0,1\}^n} \bra{\bfi}\otimes\bra{\bfi}  (I^{\otimes n}\otimes \sigma_\bfr)  \sum_{\bfj\in\{0,1\}^n} \alpha_\bfj^*\ket{\bfj}\otimes \ket{\psi}   \right|^2\\
		=&\frac{1}{2^n}\left| \bra{\psi}\sigma_\bfr \ket{\psi}   \right|^2.  
	\end{align*}
\end{proof}

Therefore,  
we can learn a stabilizer of a stabilizer state $\ket{\phi}$  by performing a joint Bell measurement on $\ket{\phi^*}\otimes \ket{\phi}$. 
If copies of the conjugate state are not available, stabilizers can be still learned by Bell measurements on $\ket{\phi}\otimes \ket{\phi}$  as in the following corollary.
\bc
Suppose that $\ket{\phi}\in \mC^{2^n}$ is an $n$-qubit stabilizer state.
Then there exists $\bfr_0\in\{0,1\}^{2n}$ such that
a joint Bell measurement on  $\ket{\phi}\otimes \ket{\phi}$ returns outcome $ \bfr$
with probability 
\[
\frac{\left|\bra{\phi}\sigma_{\bfr}\ket{\phi^*}\right|^2}{2^n}=\frac{\left|\bra{\phi}\sigma_{\bfr\oplus \bfr_0}\ket{\phi}\right|^2}{2^n}.
\]
\label{cor:Bell_stab}
\ec
\begin{proof}
	Suppose that $\ket{\phi}$ is stabilized by independent stabilizer generators $g_1,\dots,g_n\in\cP_n$.
	It is straightforward to see that the conjugate state $\ket{\phi^*}$ is stabilized by  $g_1^*,\dots,g_n^*$, 
	where $g_i^* = g_i$ if the number of its Pauli component $Y$  is even, and $g_i^* =- g_i$, otherwise.
	If $g_i^*=g_i$ for all $i=1,\dots,n$, then $\ket{\phi^*}=\ket{\phi}$ and the statement holds trivially.
	
	Now assume that $g_i^* =- g_i$ for some $g_i$, say $g_1$ for convenience, and
	$g_i^* =  g_i$ for $i=2,\dots, n$. This can be done because  if $g_j^* =- g_j$ for $g_j\neq g_1$,  we can replace it by  $g_j'= g_1g_j$ such that $(g_j')^*=g_j'$.
	Hence, the conjugate state $\ket{\phi^*}$ is stabilized by $-g_1, g_2,\dots, g_n$.
	Suppose that $h_1=\sigma_{\bfr_0}$, for some $\bfr_0\in\{0,1\}^{2n}$, is a symplectic partner of $g_1$ such that the commutation relations (\ref{eq:commutation}) hold. Then \[\ket{\phi^*}= h_1\ket{\phi},\]
	since they are both stabilized by  $-g_1, g_2,\dots, g_n$. Consequently, a Bell measurement on 
	$\ket{\phi}\otimes \ket{\phi}$ returns outcome $\bfr$
	with probability 
	\begin{align*}
		\left| \bra{\Phi_+}^{\otimes n} (I^{\otimes n}\otimes \sigma_\bfr)  \ket{\phi}\ket{\phi}   \right|^2
		=&\frac{1}{2^n}\left| \bra{\phi}\sigma_\bfr \ket{\phi^*}   \right|^2\\
		=&\frac{1}{2^n}\left| \bra{\phi}\sigma_{\bfr\oplus \bfr_0}\ket{\phi}   \right|^2. 
	\end{align*}
\end{proof}
\textbf{Remark:} The above corollary is slightly weaken than~\cite[Lemma 2]{Mon17},
where  the Pauli operator $\sigma_{\bfr_0}$ is shown to be $Z^{\bfs}$ for some $\bfs\in\{0,1\}^n$ 
by exploiting the  mathematical  structure  of a stabilizer state \cite{DDM03,VDN10}. 
This can be understood as $\ket{\phi}=   S^{\bfs}V\ket{0^n}$ for some $\bfs\in\{0,1\}^n$ and a unitary $V$ consisting of $H$, $X$, $CNOT$, and $CZ$ gates~\cite[Theorem 2]{VDN10}.

However, Corollary~\ref{cor:Bell_stab} is sufficient for our purpose of learning stabilizer states so that we have a self-contained proof here. Moreover, this idea will be exploited in learning $T$-depth one output states later.

\subsection{A Circuit Synthesis Method} \label{sec:circuit_synthesis}
In this section, we describe a circuit synthesis approach for Clifford circuits and it will be used later in Section~\ref{sec:learn_Clifford} for finding the circuit representation of an unknown target Clifford circuit.

An $n$-qubit state is described by a $2^n\times 2^n$ density matrix.
The $n$-fold Pauli operators are a basis for $2^n\times 2^n$ matrices.
To understand the evolution of a density operator under a unitary operator $U$, it suffices to know how $U$ operates on 
the $n$-fold Pauli operators. Namely, $U\rho U^\dag= \sum_\bfi \alpha_\bfi U\sigma_\bfi U^\dag$ with $\rho= \sum_\bfi \alpha_\bfi \sigma_\bfi$.
Since the Pauli matrices are related by $Y=iXZ$ (or $U Y_i U^\dag= i U X_i U^\dag\cdot U Z_i U^\dag$) and the $n$-fold Pauli group has $2n$ independent generators, one has the following lemma.

	\bl\label{lemma:2n} 
	Suppose  that $U$ is an   $n$-qubit  unitary operator.
	Given $UX_iU^\dag$ and $UY_jU^\dag$ (or $UZ_jU^\dag$) for $i,j=1,\dots,n$, $U$ can be uniquely determined up to a global phase.
	\el
	\begin{remark}
		Lemma~\ref{lemma:2n} shows that only one unitary $U$, up to a global phase, satisfies the pair constraints $(X_i, UX_i U^\dag)$ and $(Y_j, UY_jU^\dag)$ (or $(Z_j, UZ_jU^\dag)$) for $i,j=1,\dots,n$.
		However, it is computationally difficult to find the  circuit representation for $U$.
		{(For example, using na\"ive Gaussian elimination would take time $O(4^{3n})$.)}
	\end{remark}

In a learning task, the goal is usually to predict the output state $U\ket{\psi}$ or the measurement outcome on $U\ket{\psi}$.
Given knowledge of $UX_iU^\dag$ and $UY_jU^\dag$ for $i,j=1,\dots,n$, it is still not easy to do this prediction.
If we know the mathematical description of $\ket{\psi}$, we can find a Pauli decomposition of $\rho=\ket{\psi}\bra{\psi}= \sum_\bfi \beta_\bfi \sigma_\bfi$, and then evaluate the combination $U\rho U^\dag= \sum_\bfi \beta_\bfi U\sigma_\bfi U^\dag$. Since there are $4^n$ basis matrices, this decomposition is inefficient.
On the other hand, if  a polynomial-size circuit description of $U$ is available, we can simply apply $U$ to the input quantum state for predicting the output state.
In the following, we describe a crucial step---a circuit synthesis method---for finding a circuit representation for an unknown Clifford unitary.

\emph{Clifford circuits}  are composed of controlled-NOT (CNOT), Hadamard, and phase gates. 
A \textit{tableau} description of a Clifford unitary $U$ is a binary matrix with rows corresponding to $UX_iU^\dag$ and $UY_jU^\dag$ for $i,j=1,\dots,n$.
Given the tableau of $U$, 
Aaronson and Gottesman provided a circuit synthesis algorithm  that decomposes $U$ to a circuit that contains 11 stages of computation
in the sequence -H-C-P-C-P-C-H-P-C-P-C-~\cite{AG04},  where {-H-,} {-P-,} and -C- stand for stages composed of only  Hadamard, Phase, and CNOT gates, respectively.
(This is further improved to a nine-stage circuit by Maslov and Roetteler~\cite{MR18},
which can be utilized in fault-tolerant quantum computation~\cite{ZLB+20}.)
Consequently, any Clifford circuit can be decomposed into $O(n^2/\log n)$ Clifford gates with circuit depth $O(n)$~\cite{PMH08}  or  $O(n^2)$ Clifford gates with circuit depth $O(\log n)$~\cite{MN02}.
When the input to a Clifford circuit is restricted to $\ket{0^n}$, Van den Nest showed that the output state is equal to a Clifford circuit of five stages -H-C-X-P-CZ- on input $\ket{0^n}$~\cite{VDN10},
where -X- and -CZ- stand for stages of only $X$, and controlled phase gates, respectively.

In the following Lemma~\ref{lemma:Clifford_synthesis}, we show that if one only has access to an incomplete tableau, it is still possible to apply the above Clifford synthesis algorithm
with additional steps. This is similar to the encoding circuit decomposition for entanglement-assisted quantum stabilizer codes~\cite{KL19} and its proof  is omitted.

\bl \label{lemma:Clifford_synthesis}
Suppose that $U$ is a Clifford circuit. Given $UX_iU^\dag$ for $i=1,\dots,n$ and $UZ_jU^\dag$ for $j=1,\dots,t$ with $t\leq n$, we can construct a unitary operation $C$  composed of $O(n(n+t)/\log(n))$ Clifford gates such that $CX_iC^\dag=UX_iU^\dag$ for $i=1,\dots,n$ and $C Z_j C^\dag=UZ_jU^\dag$ for $j=1,\dots,t$. Moreover, such $C$ can be found in time $O(n^3)$.   \label{lemma:circuit_synthesis}
\el
Similarly, given $UX_iU^\dag, UY_jU^\dag$ for some $i,j$, one can derive a circuit $C$ such that $CX_iC^\dag=UX_iU^\dag, CY_jC^\dag=UY_jU^\dag$.

\subsection{Learning algorithm for unknown Clifford circuits} \label{sec:learn_Clifford}

In this section, we show how to learn a circuit representation for an unknown Clifford circuit $U\in \Cl_n$.
Our idea is similar to that of learning an unknown stabilizer state. For example, 
Montanaro proposed an algorithm for learning an unknown stabilizer state via identifying its stabilizer group~\cite{Mon17}.
However, identifying an unknown Clifford circuit is more complicated.
One would need to determine $UZ_iU^{\dag}$ and $UX_iU^{\dag}$ for $i=1,\dots, n$ simultaneously.  
Our key ingredient is to apply Lemma~\ref{lemma:circuit_synthesis} to find a Clifford circuit decomposition for $U$ as described in the following.

Our learning algorithm for unknown Clifford circuits is given in~Algorithm~\ref{Algorithm:I}.
We briefly explain how it works.
According to Corollary~\ref{cor:Bell_stab}, the set of Bell measurement outcomes $\{\bfr_i\oplus \bfr_0: i=1,\dots,2n\}$ obtained in step 2) is a set of stabilizers for $U\ket{0^n}$,
from which we can obtain a stabilizer group description of $U\ket{0^n}$.
Let  $\ket{\overline{0^n}}=U\ket{0^n}$ with stabilizers $  UZ_iU^{\dag}$ for $i=1,\dots,n$ and
$\ket{\overline{+^n}}=U\ket{+^n}$ with stabilizers $  UX_jU^{\dag}$ for $j=1,\dots,n$.
Using Motanaro's algorithm with $5n+2$ copies of $\ket{\overline{0^n}}$,
we can determine a  set of generators $g_1,\dots, g_n$, with  probability at least $(1-2^{-n})$,  such that 
\[
\langle g_1,\dots, g_n \rangle =\langle UZ_1U^{\dag},\dots,UZ_nU^{\dag} \rangle.
\]
Similarly, using $5n+2$ copies of $\ket{\overline{+^n}}$,
we can determine an independent set of generators $h_1,\dots, h_n$  with  probability at least $(1-2^{-n})$ such that 
\[
\langle h_1,\dots, h_n \rangle =\langle UX_1U^{\dag},\dots,UX_nU^{\dag} \rangle.
\]
Consequently, there exist $\textbf{b}_i\in\{0,1\}^n,$ for $i=1,\dots,n$, and
$\textbf{a}_j\in\{0,1\}^n,$ for $j=1,\dots,n$,  such that
\begin{align*}
	g_i =& UZ^{\textbf{b}_i}U^{\dag},\\
	h_j =& UX^{\textbf{a}_j}U^{\dag}.
\end{align*}
Once $\textbf{b}_i$ and $\textbf{a}_j$ are known, $UX_iU^\dag$ and $UZ_jU^\dag$ can be uniquely determined by Lemma~\ref{lemma:circuit_synthesis}.

\begin{algorithm}[h]

	\setcounter{AlgoLine}{0}
	\Input{  An oracle $\cO^{U}$, where $U\in\Cl_n$. }
	\Output{ A circuit description of $U$.}
	\begin{enumerate}[1)]
		\item  Prepare $5n+2$ copies of $\ket{\overline{0^n}}=U\ket{0^n}\in \mC^{2^n}$  using the oracle $\cO^{U}$.
		\item  For $i=0,\dots,2n$, perform a Bell measurement on  $\ket{\overline{0^n}}\otimes \ket{\overline{0^n}}$
		and denote the outcomes by $\bfr_i\in\{0,1\}^{2n}$.

		\item Determine a basis for $\{\sigma_{\bfr_i\oplus \bfr_0}: i=1,\dots,2n\}$ and denote the basis by  $\{g_1',\dots,g_n'\}$, where $g_i'\in\cP_n$.   
		
		\item  For $i=1,\dots, n$, do the following: 
		\begin{itemize}
			\item[-] 	Measure the Pauli operator $g_i'$ on $\ket{\overline{0^n}}$. 
			\item[-] If the outcome is $+1$, then $g_i=g_i'$ is a stabilizer of $\ket{\overline{0^n}}$; otherwise, $g_i=-g_i'$ is a stabilizer of $\ket{\overline{0^n}}$. 
		\end{itemize}
		
		Then $\{g_1,\dots,g_n\}$ is a stabilizer basis for $\ket{\overline{0^n}}$.

		\item  For $j=1,\dots, n$, do the following:
		\begin{itemize}
			\item[-]  prepare $n$ copies of $\ket{\overline{0^{j-1}10^{n-j}}}$ using the oracle $\cO^{U}$.
			\item[-]  	For $i=1,\dots,n$, measure $g_i$ on $\ket{\overline{0^{j-1}10^{n-j}}}$. 
			If the outcome is $+1$, set $b_{i,j}=0$; otherwise, $b_{i,j}=1$. 
		\end{itemize}

		\item   Find the inverse of $B=[b_{i,j}]$ and denote it by $B^{-1}=[d_{i,j}]$, using augmented matrices and Gaussian elimination.
		Then $$UZ_iU^\dag= \prod_{j=1}^n g_j^{d_{i,j}}$$ for $i=1,\dots, n$.	
		
		\item Repeat Steps 1) to 5) but with  $\ket{0},\ket{1}$, $g_j$, and $b_{i,j}$ replaced by $\ket{+},\ket{-}$, $h_j$ and $a_{i,j}$, respectively.
		\item Find the inverse of $A=[a_{i,j}]$ and denote it by $A^{-1}=[e_{i,j}]$, using augmented matrices and Gaussian elimination.
		Then $$UX_iU^\dag= \prod_{j=1}^n h_j^{e_{i,j}}$$ for $i=1,\dots, n$.	
		\item Apply Lemma~\ref{lemma:circuit_synthesis} to $\{UZ_iU^\dag, UX_jU^\dag: i,j=1,\dots, n\}$ and output the obtained Clifford circuit.
	\end{enumerate}

	\caption{Learning  an unknown Clifford circuit } \label{Algorithm:I}
\end{algorithm}

Finding a basis for  $\{\bfr_i\oplus \bfr_0: i=1,\dots,2n\}$ requires a Gaussian elimination, which takes time $O( n^3)$ in reality.  Similarly, Gaussian elimination is also needed in Lemma~\ref{lemma:circuit_synthesis}. 
The inverse of an matrix can be found by Gaussian elimination with augmented matrices. 
To sum up, we have the following theorem.
\bt \label{theorem:Clifford}
Given access to an oracle $\cO^U$, where $U$ is a Clifford unitary, one can identify $U$ using $2n^2+10n+4$ queries to $\cO^U$
in time $O(n^3)$ with probability at least $1-2^{-n+1}$.
\et

In the special case of learning simply an unknown stabilizer state $\ket{\psi}=U\ket{0^n}$,
applying steps 1) to 4) in Algorithm~\ref{Algorithm:I} is sufficient to find a set of stabilizer generators.

\section{Characterizing Quantum Circuits of Some $T$ gates} \label{sec:some_T}

\subsection{Expanded stabilizer formalism}

The Clifford gates
together with a non-Clifford gate, say $T=\begin{bmatrix}1&0\\0&e^{i\frac{\pi}{4}}\end{bmatrix}$, are universal for quantum computation \cite{NC00}. 
We will focus on quantum circuits composed of Clifford $+T$ gates in this paper. 
An arbitrary quantum circuit can be approximated by Clifford and $T$ gates and  this approximation can be decomposed as a sequence of Clifford stages and $T$ stages, alternatively. 
\bd
The number of $T$ stages in a circuit is called the \textit{$T$-depth} of the circuit.
The output state of a {$T$-depth} one circuit on input $\ket{0^n}$ will be called a \textit{\textit{$T$-depth} one output state}.
\ed 
For example, Figure~\ref{fig:UTU} provides a quantum circuit of $T$-depth one.
\begin{figure}[h]
	\[
	\includegraphics[width=8.cm]{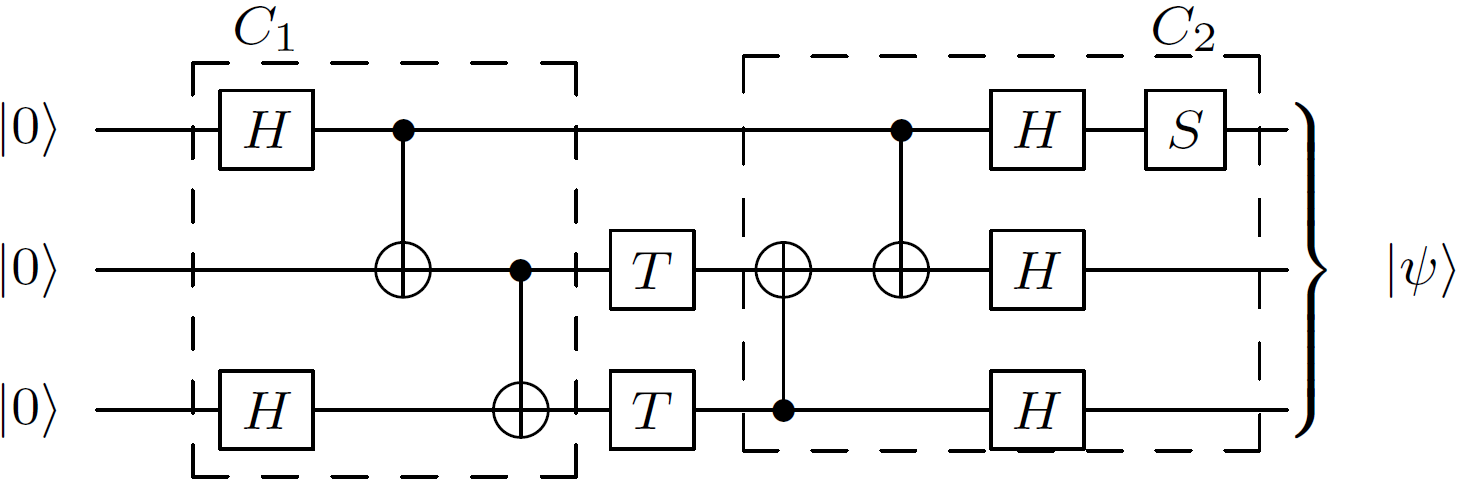}
	\]
	\caption{A quantum circuit of $T$-depth 1.} \label{fig:UTU}
\end{figure}

A $T$ gate can be implemented by the gadget shown in Figure~\ref{fig:Tgadget} with an ancillary magic state $\kT=\frac{\ket{0}+e^{i\frac{\pi}{4}}\ket{1}}{\sqrt{2}}$.
In addition, this $T$ gadget is, conditioned on the measurement outcome, equivalent to one of the postselected $T$ gadgets as shown in Figure~\ref{fig:postTgadgets}~\cite{BG16}, where only Clifford gates are required. Hence we have the following proposition.
\begin{figure}[h]
	\[
	\includegraphics[width=4.5cm]{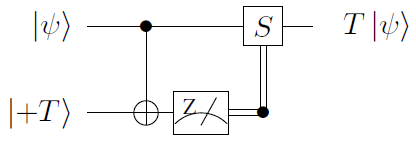}
	\]
	\caption{ $T$ gadget with a magic state $\kT$.} \label{fig:Tgadget}
\end{figure}
\begin{figure}[h]
	\[
	\includegraphics[width=5.5cm]{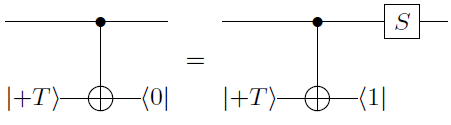}
	\]
	\caption{Postselected $T$ gadgets.} \label{fig:postTgadgets}
\end{figure}

\bp \label{prop:tgadget}
Any $n$-qubit Clifford+$T$ quantum circuit of $k$ $T$ gates
can be reduced to an $(n+k)$-qubit Clifford circuit of depth $O(\log(n+k))$ if postselection is possible and magic sates $\kT$ are available.
It can be further reduced to a quantum circuit of constant depth if, in addition, large ancillary state preparation is possible.
\ep
\begin{proof}

	Using the gadget in Figure~\ref{fig:postTgadgets}~\cite{BG16}, a quantum circuit of $k$ $T$ gates can be implemented by an equivalent Clifford circuit with $k$ ancillary magic states conditioned on outcome $0^k$ in the gadgets.
	(Both postselected $T$ gadgets work here, and  
	for simplicity, we use only the postselected gadget conditioned on outcome $0$ in the following.)
	Then the equivalent circuit has $n+k$ qubits and only Clifford gates, followed by some postselection measurements at the end.
	Since a Clifford circuit can be implemented with depth logarithmic in the number of qubits~\cite{MN02}, we have the first statement.
	
	As for the second statement, we simply use the gate teleportation technique~\cite{GC99,ZLB+20} to implement a Clifford circuit with a corresponding ancillary state.
	If this ancillary state can be prepared offline, the teleportation part can be done in constant depth.

\end{proof}

For example, the quantum circuit in Figure~\ref{fig:UTU} can
be implemented by Figure~\ref{fig:UTU2} with two postselected gadgets.
\begin{figure}[h]
	\[
	\includegraphics[width=8.cm]{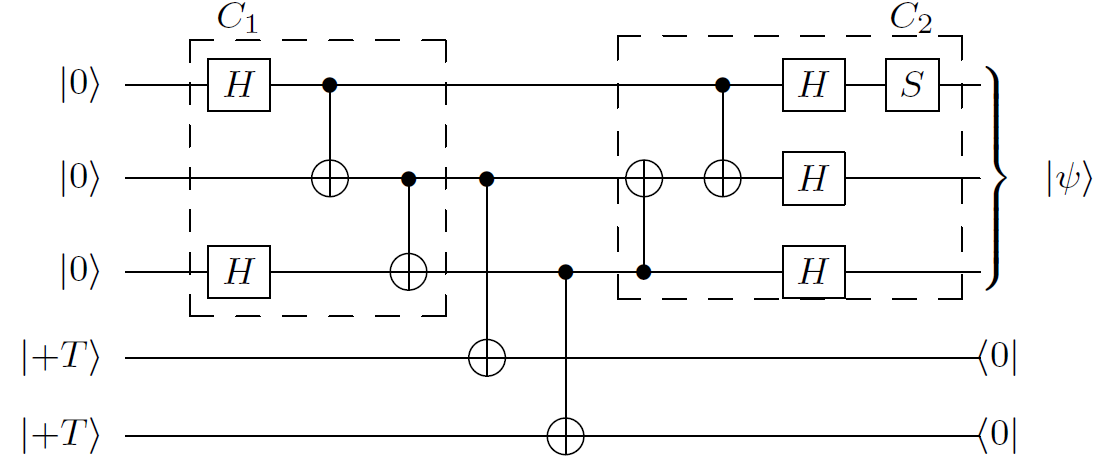}
	\]
	\caption{An  equivalent circuit of Figure~\ref{fig:UTU} with  postselected $T$ gadgets.} \label{fig:UTU2}
\end{figure}

Recently, it is shown that computations with larger quantum depth are strictly more powerful (with respect to an oracle)
and not every quantum algorithm can be implemented in logarithmic depth~\cite{CCL20,CM20}.  We remark that the above proposition does not violate the depth constraint since it assumes that postselection is available.

A quantum state $\rho$ can be represented as a  \textit{stabilizer pseudomixture} $\rho=\sum_\bfi \alpha_{\bfi} \sigma_\bfi$,
where $\sigma_\bfi$ are stabilizer states and $\alpha_{\bfi}$ are real numbers such that $\sum_\bfi \alpha_\bfi=1$.
Note that this representation is not unique and  $\alpha_{\bfi}$ can be negative.
The magic state $\kT\bT$ has the following two stabilizer pseudomixtures: 
\begin{align}
	\kT\bT=&\alpha_1 \ket{+}\bra{+} +\alpha_2 \ket{-}\bra{-}+\alpha_3\ket{+i}\bra{+i} \label{eq:T_decomp1}\\
	=&\alpha_1 \ket{+i}\bra{+i} +\alpha_2 \ket{-i}\bra{-i}+\alpha_3\ket{+}\bra{+}, \label{eq:T_decomp2}
\end{align}
where $\alpha_1=\frac{1}{2}$, $\alpha_2=\frac{1-\sqrt{2}}{2}$, and $\alpha_3=\frac{\sqrt{2}}{2}$.
In Eq.~(\ref{eq:T_decomp1}), the stabilizers of the component stabilizer states are $X$, $-X$, and $Y$, respectively, while
in Eq.~(\ref{eq:T_decomp2}), the stabilizers are  $Y$, $-Y$, and $X$, respectively.
In this work we will exploit the structure of stabilizer pseudomixtures in learning quantum circuits.
We will provide algorithms to learn Clifford $+T$ circuits of 
some $T$ gates,   specifically $O(2^{{\log n}})$.

We  propose the   following  \emph{expanded stabilizer formalism} for simulating a quantum circuit $U$ of 
$k$ $T$ gates
when the input is a stabilizer state. First we replace the $k$ $T$ gates in the target quantum circuit $U$ by  postselected $T$ gadgets,
which leads to a Clifford circuit with input a pseudomixture of $3^k$ stabilizer states.
Then the evolution of the input stabilizer state under $U$ can be done by 
tracing the $3^k$ corresponding Pauli frames in the remaining Clifford circuit
and then combining the resulting states appropriately.
We say that these $3^k$ Pauli frames constitutes an \emph{expanded Pauli frame}.
Clearly, this method can efficiently handle $k=O(\log n)$ $T$ gates.
\be Consider a $T$ gate operating on $\ket{+}$, which is stabilized by $X$. The output is $\kT= T\ket{+}$ and the expanded Pauli frame evolves from $X$ to $(X,-X,Y)$  as follows:
\begin{align*}
	&\begin{pmatrix}
		+00&10\\
		+00&01
	\end{pmatrix}, \begin{pmatrix}
		+00&10\\
		-00&01
	\end{pmatrix}, \begin{pmatrix}
		+00&10\\
		+01&01
	\end{pmatrix}\\
	\stackrel{\mbox{CNOT$_{1,2}$}}{\longrightarrow}
	&\begin{pmatrix}
		+00&11\\
		+00&01
	\end{pmatrix}, \begin{pmatrix}
		+00&11\\
		-00&01
	\end{pmatrix}, \begin{pmatrix}
		+00&11\\
		+11&01
	\end{pmatrix}\\
	=&\begin{pmatrix}
		+00&10\\
		+00&01
	\end{pmatrix}, \begin{pmatrix}
		-00&10\\
		-00&01
	\end{pmatrix}, \begin{pmatrix}
		+11&10\\
		+11&01
	\end{pmatrix}\\
	\stackrel{\mbox{$I\otimes\bra{0}_2$}}{\longrightarrow}
	&\begin{pmatrix}
		+00&10\\
		+01&00
	\end{pmatrix}, \begin{pmatrix}
		-00&10\\
		+01&00
	\end{pmatrix}, \begin{pmatrix}
		+11&10\\
		+01&00
	\end{pmatrix}\\
	\stackrel{\mbox{ discard  the ancilla}}{\longrightarrow}
	&\begin{pmatrix}
		+0&1\\
	\end{pmatrix}, \begin{pmatrix}
		-0&1\\
	\end{pmatrix}, \begin{pmatrix}
		+1&1\\
	\end{pmatrix}.
\end{align*}
\eep

In general,  we can start with an $n$-qubit Pauli frame. Each time  when a $T$ gate is applied,
we consider the $n+1$ qubit Pauli frames, expand the number of Pauli frames by three, do CNOTs, measure the ancilla with postselected outcome $0$, and then discard the ancilla qubit. 
Continuing this process, we end with $3^k$ $n$-qubit Pauli frames.

\textbf{Remark:} in the task of classical simulation in~\cite[Theorem 3]{BSS16}, knowledge of  Eq.~(\ref{eq:T_decomp1})
is sufficient  so that classical measurement outcomes can be linearly combined.
However, we need   both  Eqs.~(\ref{eq:T_decomp1}) and~(\ref{eq:T_decomp2}) to develop our expanded stabilizer formalism as shown in the following.

The transformations of single Pauli operators under $T$ in the expanded Pauli frame are summarized in Table~\ref{table:expandedT}. 
A triplet such as $(X,-X,Y)$ means that the state is a pseudomixture of the three states stabilized by $X$, $-X$, and $Y$, and with coefficients $\alpha_1,\alpha_2,$ and $\alpha_3,$ respectively.
Note that $Z$ commutes with $T$ so $\pm Z$ remains unchanged.
The two triplets in an entry correspond to the expressions~(\ref{eq:T_decomp1}) and~(\ref{eq:T_decomp2}), respectively. 
In each triplet, the second operator is always equal to the first, multiplied by $-1$.
Observe that these two triplets have the same first and third operators in the opposite order. 
Therefore, we will call the first and third operators as the \textit{primary symplectic stabilizers} (they are also symplectic partners to each other).
For example, we say that $X$ and $Y$ are the {\pss} of $\kT$.
Another reason is that when measuring each of these {\pss} on the output state,
we obtain outcome $+1$ with probability $\frac{2+\sqrt{2}}{4}$
and outcome $-1$ with probability $\frac{2-\sqrt{2}}{4}$.
These ideas naturally  extend to $n$-fold Pauli operators.

\begin{table} [h]
		\resizebox{\columnwidth}{!}{%
			\begin{tabular}{|c|cc|c|}
				\hline
				\hline
				input & output&& {\pss} \\
				\hline
				$X$ & ($X$,$-X$, $Y$)& ($Y$,$-Y$, $X$)& $X, Y$\\
				\hline
				$-X$ & ($-X$,$X$, $-Y$)& ($-Y$,$Y$, $-X$)& $-X,-Y$ \\
				\hline
				$Y$ & ($Y$,$-Y$, $-X$) 	  & ($-X$,$X$, $Y$)& $-X,Y$\\
				\hline
				$-Y$ & ($-Y$,$Y$, $X$)		& ($X$,$-X$, $-Y$)& $X,-Y$\\
				\hline
				$\pm Z$ & $\pm Z$& & NA\\
				\hline	
				\hline
		\end{tabular}}
	\caption{Expanded stabilizer formalism of single Pauli operators under the operation $T$.
	}
	\label{table:expandedT}
\end{table}
\subsection{Quantum circuits of $T$-depth one}\label{sec:learning}

An expanded Pauli frame of an $n$-qubit quantum circuit  has exponentially many Pauli frames in the number of $T$ gates and this seems intractable when $k$ goes large.  For $T$-depth one quantum circuits, we show   that it suffices to trace the evolution of at most $2n$ {\pss}.
Without loss of generality, we consider a $T$-depth one quantum circuit $U= C_2  T^\bfv   C_1$,
where $C_1$ and $C_2$ are Clifford operators in $\Cl_n$, and $\bfv\in\{0,1\}^n$ of Hamming weight $k$. 
Here $T^\bfv$ applies a $T$ gate to qubit $i$ if $\bfv_i=1$, and operates trivially, otherwise. 
Figure~\ref{fig:UTU} illustrates such an example.

 {
	\bd
	Suppose that $U= C_2  T^\bfv   C_1$ is a $T$-depth one circuit.
	$U$ is said to be of \textit{full $T$-rank }if  $C_1\ket{0}$ restricted on the support of the $T$ gates has full $X$-rank
	and $U\ket{0}$ is called a \textit{full-rank} $T$-depth one output state.
	\label{def:trank}
	\ed}

\bl Suppose that $U= C_2  T^\bfv   C_1$ is a $T$-depth one circuit, where $C_1,C_2\in\Cl_n$ and $\bfv\in\{0,1\}^n$ is of  Hamming weight $k\leq n$,  {and $U$ is of full $T$-rank.}
Then  $U\ket{0^n}$ has a pseudomixture of   $3^{\hat{k}}$ orthogonal stabilizer states, where $\hat{k}\leq k$,   
and its expanded Pauli frame has $2\hat{k}$  \pss\ and the other $n-\hat{k}$ stabilizer generators that stabilize all the component stabilizer states.
\label{lemma:3tok}
\el
\begin{proof}
	
	Recall the definition of $X$-rank in Def.~\ref{def:xrank}. 
	In the following we consider the two cases of whether $C_1\ket{0^n}$ has a Pauli frame of 
	full $X$-rank.
	
\noindent  
  Assume that $C_1\ket{0^n}$ has a Pauli frame of
		full $X$-rank.
		Without  loss of generality,
		we may assume that the Pauli frame of $C_1\ket{0^n}$ has the following form:
		\begin{align*}
			\left(
			\begin{array}{ccccc|cccc}
				\sgn{c_1}& w_{11}&w_{12}&\cdots &w_{1n}&1&0&\cdots&0\\
				\sgn{c_2}& w_{21}&w_{22}&\cdots &w_{2n}&0&1&\cdots&0\\
				\vdots&\vdots&\vdots &\ddots&\vdots&\vdots&\ddots&\vdots\\
				\sgn{c_n}& w_{n1}&w_{n2}&\cdots &w_{nn}&0&0&\cdots&1\\
			\end{array}
			\right),
		\end{align*}
		where $\bfc\in\{+1,-1\}^{n\times 1}$ and $ w_{ij} \in\{0,1\} $.
		This can be done by   appropriate row multiplications.
		
		Assume $w_{nn}=0$. The case that $w_{nn}=1$ is similar.
		Now we simulate the evolution of a postselected gadget (conditioned on outcome $0$) operating on the last qubit of $U_1 \ket{0^n}$, that is, applying a $T$ gate to the last qubit.
		By Eq.~(\ref{eq:T_decomp1}), we start with three Pauli frames (of dimension $(n+1)\times (2(n+1)+1) $):
		\begin{align*}
			&\left(
			\begin{array}{ccccc|ccccc}
				\sgn{c_1}& w_{11}&\cdots &w_{1n}&0&1& \cdots&0&0\\
				\vdots&\vdots&\ddots&\vdots&\vdots&\vdots&\ddots&\vdots&\vdots\\
				\sgn{c_n}& w_{n1}&\cdots &0&0&0&\cdots&1&0\\
				\pm &0&\cdots &0&0&0&\cdots&0&1
			\end{array}
			\right),\\
			&\left(
			\begin{array}{ccccc|ccccc}
				\sgn{c_1}& w_{11}&\cdots &w_{1n}&0&1& \cdots&0&0\\
				\vdots&\vdots&\ddots&\vdots&\vdots&\vdots&\ddots&\vdots&\vdots\\
				\sgn{c_n}& w_{n1}&\cdots &0&0&0&\cdots&1&0\\
				&0&\cdots &0&1&0&\cdots&0&1
			\end{array}
			\right)
		\end{align*}
		\begin{align*}
			\stackrel{\text{CNOT}}{\longrightarrow} 
			&\left(
			\begin{array}{ccccc|ccccc}
				\sgn{c_1}& w_{11}&\cdots &w_{1n}&0&1& \cdots&0&0\\
				\vdots&\vdots&\ddots&\vdots&\vdots&\vdots&\ddots&\vdots&\vdots\\
				\sgn{c_n}& w_{n1}&\cdots &0&0&0&\cdots&1&1\\
				\pm &0&\cdots &0&0&0&\cdots&0&1
			\end{array}
			\right),
			\end{align*}
	\begin{align*}	&\left(
			\begin{array}{ccccc|ccccc}
				\sgn{c_1}& w_{11}&\cdots &w_{1n}&0&1& \cdots&0&0\\
				\vdots&\vdots&\ddots&\vdots&\vdots&\vdots&\ddots&\vdots&\vdots\\
				\sgn{c_n}& w_{n1}&\cdots &0&0&0&\cdots&1&1\\
				&0&\cdots &1&1&0&\cdots&0&1
			\end{array}
			\right)
		\end{align*}
	\begin{align*}=&\left(
			\begin{array}{ccccc|ccccc}
				\sgn{c_1}& w_{11}&\cdots &w_{1n}&0&1& \cdots&0&0\\
				\vdots&\vdots&\ddots&\vdots&\vdots&\vdots&\ddots&\vdots&\vdots\\
				\sgn{\pm c_n}& w_{n1}&\cdots &0&0&0&\cdots&1&0\\
				\pm &0&\cdots &0&0&0&\cdots&0&1
			\end{array}
			\right),\\
			&\left(
			\begin{array}{ccccc|ccccc}
				\sgn{c_1}& w_{11}&\cdots &w_{1n}&0&1& \cdots&0&0\\
				\vdots&\vdots&\ddots&\vdots&\vdots&\vdots&\ddots&\vdots&\vdots\\
				\sgn{c_n}& w_{n1}&\cdots &1&1&0&\cdots&1&0\\
				&0&\cdots &1&1&0&\cdots&0&1
			\end{array}
			\right)
		\end{align*}
		\begin{align*}
			\stackrel{\bra{0}}{\longrightarrow} 
			&\left(
			\begin{array}{ccccc|ccccc}
				\sgn{c_1}& w_{11}&\cdots &w_{1n}&0&1& \cdots&0&0\\
				\vdots&\vdots&\ddots&\vdots&\vdots&\vdots&\ddots&\vdots&\vdots\\
				\sgn{\pm c_n}& w_{n1}&\cdots &0&0&0&\cdots&1&0\\
				&0&\cdots &0&1&0&\cdots&0&0
			\end{array}
			\right),\\
			&\left(
			\begin{array}{ccccc|ccccc}
				\sgn{c_1}& w_{11}&\cdots &w_{1n}&0&1& \cdots&0&0\\
				\vdots&\vdots&\ddots&\vdots&\vdots&\vdots&\ddots&\vdots&\vdots\\
				\sgn{c_n}& w_{n1}&\cdots &1&1&0&\cdots&1&0\\
				&0&\cdots &0&1&0&\cdots&0&0
			\end{array}
			\right).
		\end{align*}
		Since the ancilla stays in $\ket{0}$ and can be discarded, the three $n$-qubit Pauli frames become
		\begin{align}
			&\left(
			\begin{array}{cccc|cccc}
				\sgn{c_1}& w_{11}&\cdots &w_{1n}&1& \cdots&0\\
				\vdots&\vdots&\ddots&\vdots&\vdots&\ddots&\vdots\\
				\sgn{\pm c_n}& w_{n1}&\cdots &0&0&\cdots&1
			\end{array}
			\right),\\
			&\left(
			\begin{array}{cccc|cccc}
				\sgn{c_1}& w_{11}&\cdots &w_{1n}&1& \cdots&0\\
				\vdots&\vdots&\ddots&\vdots&\vdots&\ddots&\vdots\\
				\sgn{c_n}& w_{n1}&\cdots &1&0&\cdots&1
			\end{array}
			\right). \label{eq:3PF}
		\end{align}
		Note that the first two Pauli frames only differ by a sign on the last row.
		Each of these three frames has an identity on the right half.

		Repeating this process on different qubits, we end up with $3^{k}$ Pauli frames.
		Clearly these Pauli frames are different and they correspond to orthogonal stabilizer states.
		After each $T$ gate, we have one additional \ps. 
		Therefore, these $3^k$ Pauli frames share $n+k$ independent generators up to a phase $-1$
		or the expanded Pauli frame of $T^{\bfv}C_1\ket{0^n}$ has $2k$ \pss\ and the other $n-k$ stabilizer generators that stabilize all the components stabilizer states.  
		Finally these Pauli frames evolve according to the Clifford unitary $C_2$, which preserves the expanded Pauli frame structure.

 {	
If $C_1\ket{0^n}$ is not of full $X$-rank,
we can do row operations on the Pauli frame so that  its restriction on the support of the $T$ gates is of full $X$-rank by assumption.
Then the statement holds following a similar argument.}

\end{proof}

\textbf{Remark:}
The above lemma says that the output state $\ket{\psi}$ of a $T$-depth one quantum circuit   {of full $T$-rank} on input a computational basis vector can be represented by a pseudomixture of $3^k$ stabilizer states ($k\leq n$), which share
$n+k$ independent Pauli generators up to a phase $-1$.
Among the $n+k$ independent Pauli generators, $2k$ of them  are  symplectic partners, say $g_{n-k+1},h_{n-k+1},\dots,g_n,h_n$, and are   primary symplectic stabilizers.  The remaining $n-k$ operators, say $g_{1},\dots,g_{n-k}$, stabilize all the component stabilizer states and will be called 
\emph{isotropic stabilizers}.
Note that  two \pss\ are equivalent up to  a product of the isotropic stabilizers. The following is the set of all the stabilizers of the $3^k$ stabilizer states up to a phase $-1$:
\begin{align*}
&\langle g_1,\dots,g_{n-k} \rangle \times \{I^{\otimes n},g_{n-k+1},h_{n-k+1}\}\\
&\times  \{I^{\otimes n},g_{n-k+2},h_{n-k+2}\} \times \cdots \times  \{I^{\otimes n},g_{n},h_{n}\}.
\end{align*}
Here the set product is defined by 
\begin{align*}
	&\{x_1, \dots,x_m \}\times \{y_1,\dots,y_l \}\\
	=&\{x_1y_1,x_1y_2,\dots,x_1y_l,x_2y_1,\dots,x_my_l \}.  
\end{align*}
Without loss of generality, the $T$-depth one output state $\ket{\psi}$ admits the following stabilizer pseudomixture
\begin{align}
	\ket{\psi}\bra{\psi}= \sum_{\bfj\in\{1,2,3\}^k } \beta_\bfj   \ket{\phi_{\bfj}}\bra{\phi_\bfj}, \label{eq:stabilizer_pseduomixture}
\end{align}
where $\beta_\bfj= \prod_{i=1}^k \alpha_{\bfj_i}$ and $\phi_{\bfj}$ is stabilized by $g_{n-k+i}$ if $\bfj_i=1$,  $-g_{n-k+i}$ if $\bfj_i=2$, or $h_{n-k+i}$ if $\bfj_i=3$, for $i=1,\dots, k$.

\be  $T\otimes   I (\ket{00}+\ket{11})/\sqrt{2}$ has an expanded Pauli frame generated by two \pss\ $\{ XX,XY\}$ and an isotropic stabilizer $ZZ$.
\eep

\be If $C_1\ket{0^n}$ is a graph state, then its Pauli frame has full $X$-rank.
\eep

\be  $T\otimes T\otimes I (\ket{000}+\ket{111})/\sqrt{2}=S\otimes I\otimes I (\ket{000}+\ket{111})/\sqrt{2}$, which is a stabilizer state.
\eep

\be The single-qubit state $THT\ket{+}$ is not a $T$-depth one output state, and it has  a pseudomixture of five stabilizer states:
$\alpha_1 \ket{0}\bra{0}+\alpha_2 \ket{1}\bra{1}+ \alpha_3\alpha_1 \ket{+i}\bra{+i}+\alpha_3\alpha_2\ket{-i}\bra{-i}+\alpha_3^2\ket{-}\bra{-}$.
\eep

For classical simulation of such a quantum circuit, it can be done by 
simulating each stabilizer state and linearly combining the $3^{k}$ outputs~\cite{BSS16}.
The central idea of our learning method is to reverse the above simulation process. 
	However, the task of identifying states is more difficult since we do not have the $3^{k}$ Pauli frames to start with, but only with several copies of the final state.
In the following section we will provide an algorithm to derive an effective quantum circuit that maps the  all-zero state to the state corresponding to the unknown $3^{k}$ Pauli frames for $k=O( {\log(n)})$.

\section{Learning quantum circuits of certain magic} \label{sec:learning_T}

\subsection{Efficiently Learning $T$-depth one output states with input computational basis states}

In this section we study learning of the output of an unknown  quantum circuit $U$ of {some} $T$ gates, given oracle access to $\cO^{U}$, on input a computational basis state $\ket{0^n}$.
We make an additional assumption that the conjugate oracle $\cO^{U^*}$ is also available
since this will simplify the learning task.
	In the next section, we will show that this assumption can be removed at an additional cost.

Suppose that $U$ is a $T$-depth one quantum circuit  {of full $T$-rank}.	Let  $\ket{\psi}= U\ket{0^n}$ be the target quantum state.
Lemma~\ref{lemma:3tok} indicates that the state $\ket{\psi}=U\ket{0^n}$ is a pseudo mixture of at most $3^k$ stabilizer states,
which have at most $2n$ \pss. 
Consequently, one can identify $\ket{\psi}$  
by finding its \pss.

Lemma~\ref{lemma:BellMeasure} says that one can learn a stabilizer, up to a phase, of a stabilizer state by a Bell measurement.  In the following Lemma~\ref{lemma:T1_state},
we analyze the outcomes of Bell measurements on $T$-depth one output states.
\bl\label{lemma:T1_state}
Suppose $\ket{\psi}\bra{\psi}= \sum_{\bfj\in\{1,2,3\}^k} \beta_\bfj \ket{\phi_\bfj}\bra{\phi_\bfj}$ is a  {full-rank} $T$-depth one  output state as defined in~Eq.\,(\ref{eq:stabilizer_pseduomixture}).
Let $\bfr\in\{0,1\}^{2n}$ be the outcome of a joint Bell measurement on $\ket{\psi^*}\otimes \ket{\psi}$.
\begin{enumerate}
	\item If $(\pm) \sigma_\bfr$ is not a stabilizer of any of the $\ket{\phi_\bfj}\bra{\phi_\bfj}$,
	the Bell measurement returns $ \bfr$ with zero probability.
	
	\item If $(\pm) \sigma_\bfr$ is an isotropic stabilizer, the Bell measurement returns outcome $\bfr$   with probability $1/2^n$.
	\item If $ (\pm)  \sigma_\bfr$  is a {\ps}, the Bell measurement returns outcome $\bfr$   with probability $\frac{1}{2^{n+1}}$.

	\item  if $ (\pm)  \sigma_\bfr$  is a product of $m$ primary symplectic stabilizers,  the Bell measurement returns outcome $\bfr$   with probability $\frac{1}{2^{n+m}}$.

\end{enumerate}
\label{lemma:BellMeasure2}	
\el
\begin{proof}
	
	The outcome of a joint Bell measurement on  $\ket{\psi^*}\otimes \ket{\psi}$ is $\bfr\in\{0,1\}^{2n}$
	with probability 
	\begin{align*}
		\left| \bra{\Phi_+}^{\otimes n} (I\otimes \sigma_\bfr)  \ket{\psi^*}\ket{\psi}   \right|^2
		=&\frac{1}{2^n}\left|\sum_{\bfj} \beta_\bfj \bra{\phi_\bfj}\sigma_\bfr \ket{\phi_\bfj}   \right|^2. 
	\end{align*}
	
	\begin{enumerate}
		\item If $(\pm) \sigma_\bfr$ is not a stabilizer of any  $\ket{\phi_\bfj}\bra{\phi_\bfj}$,
		then for each $\ket{\phi_\bfj}$, there exists a stabilizer $g\in\cP_n$ of $\ket{\phi_\bfj}$ such that $g\sigma_\bfr=-\sigma_\bfr g$ and thus $\bra{\phi_\bfj}\sigma_\bfr\ket{\phi_\bfj}=\bra{\phi_\bfj}\sigma_\bfr 
		\left(\frac{I^{\otimes n}+g}{2}\right)\ket{\phi_\bfj}=  \bra{\phi_\bfj}\left(\frac{I^{\otimes n}-g}{2}\right)\sigma_\bfr\ket{\phi_\bfj}=0$.
		Thus, the Bell measurement returns $ \bfr$ with zero probability.
		
		\item If $(\pm) \sigma_\bfr$ is an isotropic stabilizer,  $\bra{\phi_\bfj}\sigma_\bfr\ket{\phi_\bfj}=1$ for all $\bfj$ and then the Bell measurement returns outcome $\bfr$   with probability $1/2^n$.
		\item If $ (\pm)  \sigma_\bfr$  is a {\ps}, say $h_{n-k+1}$, then it will stabilize exactly $1/3$ of the states; $\sigma_\bfr$'s symplectic partner, say $g_{n-k+1}$,  will stabilize another $1/3$;
		and $-g_{n-k+1}$ will stabilize the other $1/3$. Only the terms   stabilized by $\sigma_\bfr$ have contribution to the probability and they sum to $\alpha_3$. 
		Thus
		\begin{align*}
			&\frac{1}{2^n}\left|\sum_{\bfj\in\{1,2,3\}^k} \beta_\bfj \bra{\phi_\bfj}\sigma_\bfr \ket{\phi_\bfj}   \right|^2\\
			=&\frac{1}{2^n}\left|\sum_{\bfj\in \{1,2,3\}^k, \bfj_1=3} \beta_\bfj \bra{\phi_\bfj}\sigma_\bfr \ket{\phi_\bfj}   \right|^2\\
			=&\frac{1}{2^n} \alpha_3^2\\
			=&\frac{1}{2^{n+1}}>0.
		\end{align*}

		\item Similarly, if $ (\pm)  \sigma_\bfr$  is a product of $m$ primary symplectic stabilizers, say $h_{n-k+1}h_{n-k+2}\cdots h_{n-k+m}$,
		then  	\begin{align*}
			&\frac{1}{2^n}\left|\sum_{\bfj\in\{1,2,3\}^k} \beta_\bfj \bra{\phi_\bfj}\sigma_\bfr \ket{\phi_\bfj}   \right|^2\\
			=&\frac{1}{2^n}\left|\sum_{\bfj\in \{1,2,3\}^k, \bfj_1=\cdots=\bfj_m=3} \beta_\bfj \bra{\phi_\bfj}\sigma_\bfr \ket{\phi_\bfj}   \right|^2\\
			=&\frac{1}{2^n} \alpha_3^{2m}\\
			=&\frac{1}{2^{n+m}}>0.
		\end{align*}
	\end{enumerate}
\end{proof}

\textbf{Remark:}  the Bell measurement returns  a \ps,  up to a phase, with higher probability than a product of some \pss.

In the previous discussion, we assumed the availability of copies of the conjugate state $\ket{\psi^*}$ of a $T$-depth one output state $\ket{\psi}$ of $U$ on input $\ket{0^n}$.  If copies of the conjugate state $\ket{\psi^*}$ are not available, it can be very expensive to construct $\ket{\psi^*}$ using $\cO^{U}$~\cite{MSM19}. 
In the following, we prove a lemma for Bell measurement on two copies of a  {full-rank} $T$-depth one output state, which is similar to Corollary~\ref{cor:Bell_stab} for stabilizer states.
\bl \label{lemma:BellMeasure3}
Suppose that $\ket{\psi}\bra{\psi}= \sum_{\bfj\in\{1,2,3\}^k} \beta_\bfj \ket{\phi_\bfj}\bra{\phi_\bfj}$ is a  {full-rank} $T$-depth one  output state as defined in~Eq.\,(\ref{eq:stabilizer_pseduomixture}). 
Then there exists $\bfr_0\in\{0,1\}^{2n}$ such that
a joint Bell measurement on  $\ket{\psi}\otimes \ket{\psi}$ returns outcome $ \bfr$
with probability 
\[
\frac{\left|\bra{\psi}\sigma_{\bfr}\ket{\psi^*}\right|^2}{2^n}=\frac{\left|\bra{\psi}\sigma_{\bfr\oplus \bfr_0}\ket{\psi}\right|^2}{2^n}.
\]
\el
\begin{proof}
	
	Suppose that $\ket{\psi}$ has an expanded Pauli frame generated by $g_1,\dots,g_{n},h_{n-k+1},\dots,h_n\in\cP_n$. 
	Thus
	$\ket{\psi^*}$ has an expanded Pauli frame generated by $g_1^*,\dots,g_{n}^*,h_{n-k+1}^*,\dots,h_n^*$,
	where $g_i^* = g_i$ if the number of its Pauli component $Y$  is even, and $g_i^* = -g_i$, otherwise. Similarly, $h_i^*=\pm h_i$.
	
	We consider two cases that whether there exists an isotropic stabilizer  that is equal to its conjugate. 
	First, suppose that one isotropic stabilizer $g_1$ is such that $g_1^* = -g_1$. We may assume that
	$g_i^* = g_i$ for $i=2,\dots, n$ and $h_i^*=  h_i$ for $i=n-k+1,\dots, n$.
	Therefore, $\ket{\phi_\bfj^*}$ is stabilized by operators in $\langle -g_1,g_2,\dots,g_{n-k} \rangle\times \{I^{\otimes n}, g_{n-k+1},h_{n-k+1}\}\times \{I^{\otimes n},g_{n-k+2},h_{n-k+2}\}\times \cdots \times\{I^{\otimes n},g_n,h_n\}$. 
	Consider   $\ket{\phi_{11\cdots 1}^*}$, which is stabilized by
	$-g_1,g_2,\dots,g_n$. 	Let $h_1=\sigma_{\bfr_0}$ for some $\bfr_0\in\{0,1\}^{2n}$ be a symplectic partner of $g_1$ such that the commutation relations (\ref{eq:commutation}) hold.  Then we have 
	\[
	\ket{\phi_{11\cdots 1}^*}= h_1 \ket{\phi_{11\cdots 1}}
	\]
	since $h_1 \ket{\phi_{11\cdots 1}}$ is also stabilized by $-g_1,g_2,\dots,g_n$. Similarly, we can show that 
	\[
	\ket{\phi_{\bfj}^*}\bra{\phi_{\bfj}^*}= h_1 \ket{\phi_{\bfj}}\bra{\phi_{\bfj}} h_1^\dag   \mbox{ for all } \bfj\in\{1,2,3\}^k.
	\]
	This implies that 
	\[
	\ket{\psi^*}\bra{\psi^*}= h_1\ket{\psi}\bra{\psi}h_1^\dag. 
	\]
	Consequently, a joint Bell measurement on  $\ket{\psi}\otimes \ket{\psi}$ returns outcome  $\bfr\in\{0,1\}^{2n}$
	with probability 
	\begin{align*}
		\left| \bra{\Phi_+}^{\otimes n} (I\otimes \sigma_\bfr)  \ket{\psi}\ket{\psi}   \right|^2 
		=&\frac{1}{2^n}{ \bra{\psi} \sigma_\bfr \ket{\psi^*}  \bra{\psi^*} \sigma_\bfr^\dag  \ket{\psi}  } \\
		=&\frac{1}{2^n}\tr{\left( \ket{\psi} \bra{\psi} \sigma_\bfr \ket{\psi^*}  \bra{\psi^*} \sigma_\bfr \right) } \\
		=&\frac{1}{2^n}\tr{\left( \ket{\psi} \bra{\psi} \sigma_\bfr h_1\ket{\psi}  \bra{\psi} h_1^\dag  \sigma_\bfr \right) } \\
		=& \frac{\left|\bra{\psi}\sigma_{\bfr\oplus \bfr_0}\ket{\psi}\right|^2}{2^n}.
	\end{align*}
	
	Now suppose that 	$g_i^* = g_i$ for $i=1,\dots, n-k$. Let $ \cG=\{i:\ n-k+1\leq i\leq n,\ g_i^*=-g_i\}$ and $\cH=\{i:\ n-k+1\leq i\leq n,\ h_i^*=- h_i\}.$
	Let 
	\begin{align}
		\sigma_{\bfr_0}= \prod_{i\in\cG} h_i \prod_{l\in\cH}g_l. \label{eq:sigma_r0}
	\end{align}  
	Consider $\ket{\phi_\bfj^*}$ with $\bfj= \bfj_1 \bfj_2 \cdots \bfj_k$. We have
	\begin{align*}
		 \ket{\phi_\bfj^*}\bra{\phi_\bfj^*} 
		\stackrel{(a)}{=}& \left(\prod_{i\in\cG,\ \bfj_i=1,2} h_i \prod_{l\in\cH,\ \bfj_l=3}g_l \right) \ket{\phi_\bfj} \bra{\phi_\bfj}\\
		&\cdot 
		\left( \prod_{i\in\cG,\ \bfj_i=1,2} h_i \prod_{l\in\cH,\ \bfj_l=3}g_l \right)^\dag \\
		\stackrel{(b)}{=}& \sigma_{\bfr_0} \ket{\phi_{\bfj}}\bra{\phi_{\bfj}} \sigma_{\bfr_0}^\dag   \mbox{ for all } \bfj\in\{1,2,3\}^k.
	\end{align*}    
	Equality $(a)$ is because that the Pauli operator  $\prod_{i\in\cG,\ \bfj_i=1,2} h_i \prod_{l\in\cH,\ \bfj_l=3} g_l$ is used to modify the phases of the stabilizers of $ \ket{\phi_\bfj}$ so that
	the modified state has the same stabilizers as  $\ket{\phi_\bfj^*}$.
	Next, for $i\in\cG$ and $\bfj_i=3$, $h_i$ is a stabilizer of $\ket{\phi_{\bfj}}$
	and similarly, for $l\in\cH$ and $\bfj_l=1,2$, $\pm g_l$ is a stabilizer of $\ket{\phi_{\bfj}}$. Hence  equality $(b)$ holds. 
	Following the same steps as above, we can show that a joint Bell measurement on  $\ket{\psi}\otimes \ket{\psi}$ returns outcome  $\bfr\in\{0,1\}^{2n}$
	with probability $\frac{\left|\bra{\psi}\sigma_{\bfr\oplus \bfr_0}\ket{\psi}\right|^2}{2^n}.$	
\end{proof}

	Lemma~\ref{lemma:BellMeasure} and Lemma~\ref{lemma:BellMeasure3} suggest that a Bell measurement on $\ket{\psi}\otimes \ket{\psi}$ returns an outcome that corresponds to a stabilizer of its component states, up to a phase, times a fixed Pauli operator.
	Therefore, we can find a set of generators $g_1',g_2',\dots,g_n',h_{n-k+1}',h_n',\dots,h_k'$ by Bell sampling on  $\ket{\psi}\otimes \ket{\psi^*}$ for sufficiently many times according to Lemma~\ref{lemma:BellMeasure2} and Lemma~\ref{lemma:BellMeasure3} so that the $2k$ target \pss\ can be generated from  $g_1',g_2',\dots,g_n',h_{n-k+1}',\dots,h_n'$.

\bl
Suppose that $\ket{\psi}\bra{\psi}= \sum_{\bfj\in\{1,2,3\}^k} \beta_\bfj \ket{\phi_\bfj}\bra{\phi_\bfj}$ is a  {full-rank} $T$-depth one  output state as defined in~Eq.\,(\ref{eq:stabilizer_pseduomixture}). 
Performing a Bell measurement on  $\ket{\psi}\otimes \ket{\psi}$ and repeating $8n+1$ times, one can identify a set of $n+k$ independent generators with  probability at least $1-4^{-n}$.
\label{lemma:findbasis2} 
\el
\begin{proof}
	Suppose the expanded Pauli frame has $2k$ \pss\ $g_{n-k+1},\dots,g_n,h_{n-k+1},\dots,h_n$ and $n-k$ isotropic stabilizers $g_{1},\dots,g_{n-k}$.
	Since the joint Bell measurement on $\ket{\psi}\otimes \ket{\psi}$ does not reveal the correct phase of a stabilizer,
	we do not have to discuss this phase. 
	The Bell measurement returns an outcome corresponding to a Pauli operator in $ \sigma_{\bfr_0}\times \langle g_1,\dots,g_{n-k} \rangle \times \{I^{\otimes n},g_{n-k+1},h_{n-k+1}\}\times  \{I^{\otimes n},g_{n-k+2},h_{n-k+2}\} \times \cdots \times  \{I^{\otimes n},g_{n},h_{n}\}.$
	Setting the first outcome as a reference, the remaining $8n$ samples lie in 
	\begin{align*}
		&\langle g_1,\dots,g_{n-k} \rangle \times \{I^{\otimes n},g_{n-k+1},h_{n-k+1}\}\\
		&\times  \{I^{\otimes n},g_{n-k+2},h_{n-k+2}\} \times \cdots \times  \{I^{\otimes n},g_{n},h_{n}\}.
	\end{align*}    
	These outcomes have a total of  $3^k\times 2^{n-k}$ possibilities,
	of which $3^{k-1}\times 2^{n-k}$ outcomes are related to $h_n$, that is, each of the outcome is a product of $h_n$ and some other generators, such as $h_n,h_{n}g_{n-1},h_{n}h_{n-2},\dots,h_{n}h_{n-1}\cdots h_{n-k+1}$.
	The probability of  obtaining these outcomes in the modified outcome  is 
	\[
	\sum_{j=0}^{n-1} 2^i{n-1 \choose i}\frac{1}{2^{n+1+i}}=\frac{1}{4}.
	\]
	Thus the probability that $8n$ Bell measurement outcomes are not related to $h_n$ is $(3/4)^{8n}$.
	Since there are  $3^k\times 2^{n-k}\leq 3^n$ distinct Pauli outcomes and the probability of obtaining each outcome is no larger than $1/4$, 
	the probability that the $8n$ outcomes do not have $n+k$ independent generators is upper bounded by
	\begin{align}
	3^n\times  (3/4)^{8n}\leq 4^{-n}.
	\end{align}
\end{proof}

From Lemma~\ref{lemma:findbasis2}, one can find a generating set that contains the $2k$ \pss\ and $n-k$ \iss.
Unfortunately, we do not know any efficient method to find these $2k$ \pss\ and $n-k$ \iss.
The number of possibilities of $2k$ {\pss} is bounded by ${3^{k}\choose {2k}} <3^{2k^2}$.
This seems to suggest that we can handle  only $O(\sqrt{\log n})$ $T$ gates by brute force.
However, in the following we will show that $k=O({\log n})$ $T$ gates can be handled.

Before we show how to determine whether a set of $2k$ symplectic partners are our target \pss, we need the following lemma, which 
generalizes the idea of measuring a Pauli operator on a stabilizer state.
\bl
Suppose $\ket{\psi}\bra{\psi}= \sum_{\bfj\in\{1,2,3\}^k} \beta_\bfj \ket{\phi_\bfj}\bra{\phi_\bfj}$ is a  {full-rank} $T$-depth one  output state as defined in~Eq.\,(\ref{eq:stabilizer_pseduomixture}). Then
\begin{enumerate}
	\item If $g\in\cP_n$ is not a stabilizer of any of the $\ket{\phi_\bfj}\bra{\phi_\bfj}$,
	then measuring $g$ on $\ket{\psi}$ returns outcome $+1$ ($-1$) with probability $1/2$ $(1/2)$.
	\item If $g$ is an isotropic stabilizer, then  measuring $g$ on $\ket{\psi}$ returns outcome $+1$   with probability $1$.
	\item If $g$ is a \ps, then  measuring $g$ on $\ket{\psi}$ returns outcome $+1$ with   probability $\frac{2+\sqrt{2}}{4}$ and outcome $-1$ with probability $\frac{2-\sqrt{2}}{4}$.
	
	\item If $g$ is a product of $m$ primary symplectic stabilizers, 
	then  measuring $g$ on $\ket{\psi}$ returns outcome $+1$ with   probability $\frac{1}{2}(1+ 2^{-m/2})$ and outcome $-1$ with  probability $\frac{1}{2}(1- {2^{-m/2}})$.

\end{enumerate}

\label{lemma:Pauli_measurement_T}
\el
\begin{proof}
	1) 
		Consider  $g\in\cP_n$ but $g$ does not stabilize any of the $\ket{\phi_\bfj}$.
	For each $\ket{\phi_\bfj}$, there exists $g_\bfj$ such that $gg_\bfj=-g_\bfj g$. Then measuring $g$ on $\ket{\psi}\bra{\psi}$ 
	returns outcome $+1$ with probability
	\begin{align*}
		\Pr(+1)=&\tr\left( \frac{I^{\otimes n}+g }{2}\ket{\psi}\bra{\psi} \right)\\
		=&\sum_{\bfj=1}  \beta_\bfj  \tr\left( \frac{I^{\otimes n}+g }{2}\ket{\phi_\bfj}\bra{\phi_\bfj} \right)\\
		=&\sum_{\bfj} \beta_\bfj  \tr\left( \frac{I^{\otimes n}+g }{2}g_\bfj\ket{\phi_\bfj}\bra{\phi_\bfj} \right)\\
		=&\sum_{\bfj} \beta_\bfj  \tr\left( \frac{I^{\otimes n}-g }{2}\ket{\phi_\bfj}\bra{\phi_\bfj} \right),
	\end{align*}
	which is, by definition, $\Pr(-1)$. Thus $\Pr(+1)=\Pr(-1)=1/2$.
	
	2) 
	Consider $g\in\cP_n$ and $g$ stabilizes all of the $\ket{\phi_\bfj}$.
	Thus 
	$
	\Pr(+1) = \sum_{\bfj} \beta_\bfj  \tr\left( \frac{I+g }{2}\ket{\phi_\bfj}\bra{\phi_\bfj} \right)=1.
	$
	If $-g$ stabilizes all of the $\ket{\phi_\bfj}$, then 
	measuring $g$ on $\ket{\psi}\bra{\psi}$ 
	returns outcome $-1$ with probability one and we know that $-g$ is a stabilizer.
	
	3) 
	According to the assumption, we may say that $g$ is a primary symplectic partner, say $g=h_{n-k+1}$, that stabilizes exactly $1/3$ of the states $\{\ket{\phi_\bfj}\}$,  $g$'s symplectic partner, say $g_{n-k+1}$,  will stabilize another $1/3$ of  $\{\ket{\phi_\bfj}\}$,
	and   $-g_{n-k+1}$, will stabilize the other $1/3$ of  $\{\ket{\phi_\bfj}\}$.
	Then 
	\begin{align*}
		&\Pr(+1)\\
		=& \sum_{\bfj: \bfj_1=3  }\beta_\bfj  \tr\left( \frac{I+g }{2}\ket{\phi_\bfj}\bra{\phi_\bfj} \right)+ \sum_{\bfj: \bfj_1=1 }\beta_\bfj  \tr\left( \frac{I+g }{2}\ket{\phi_\bfj}\bra{\phi_\bfj} \right)\\
		&+ \sum_{\bfj: \bfj_1=2}\beta_\bfj  \tr\left( \frac{I+g }{2}\ket{\phi_\bfj}\bra{\phi_\bfj} \right)\\
		=&  \alpha_3 +\frac{1}{2}(1-\alpha_3)=\alpha_1 +\frac{1}{2}\alpha_3= \frac{2+\sqrt{2}}{4}.
	\end{align*}

	4)
	Similarly, 	
	\begin{align*}
		\Pr(+1)
		=&  \alpha_3^m+\frac{1}{2}(1-\alpha_3^m)=\frac{1}{2}(1+\alpha_3^m).
	\end{align*}
\end{proof}

Therefore, by measuring a Pauli operator $O(1/\epsilon)$ times for some accuracy $\epsilon>0$,
we can determine whether it is an isotropic stabilizer or a primary stabilizer from the distribution of the measurement outcomes.

Given  $n-k$ isotropic stabilizers $g_{1},\dots,g_{n-k}$ and $2k$ \pss\ $g_{n-k+1},\dots,g_n,$ $h_{n-k+1},\dots,h_n\in\cP_n$   satisfying the commutation relations (\ref{eq:commutation}),
one can find an effective circuit decomposition that generates the corresponding $T$-depth one output state. 
The complete algorithm for learning  the output state of a $T$-depth one quantum circuit  {of full $T$-rank} on input $ \ket{0^n}$ is given in Algorithm~\ref{Algorithm:II} below.
Note that Step (6) follows from Table~\ref{table:expandedT}. In summary, we have the following theorem of learning unknown $T$-depth one output states.

\begin{algorithm}
	
	\Input{$O(n)$ $|0^n\rangle$ states, oracle $U$, which is of $T$-depth one and full $T$-rank.}
	\Output{ {a quantum circuit $\hat{U}$ such that   $\hat{U}\ket{\bfv}=U\ket{\bfv}$ for $\bfv\in\{0,1\}^n$.}}

	\setcounter{AlgoLine}{0}
	
	\begin{enumerate}[(1)]
		\item Prepare $O(3^k n)$ copies of the unknown stabilizer state $\ket{\psi}\in \mC^{2^n}$ by querying to $U$.
		\item Do $8n+1$ Bell measurements on  $\ket{\psi}\otimes \ket{\psi}$ and let the outcomes be $\bfr_i'\in\{0,1\}^{2n}$ for $i=0,\dots, 8n$.
		Let $\bfr_i=\bfr_i' \bfr_0'$ for $i=1,\dots,8n$.
		
		\item Do a Gaussian elimination on $\sigma_{\bfr_i}$ to find $k+n$ independent generators, say $\{g_1',\dots,g_n',h_{n-k+1}',\dots,h_n'\}$,  satisfying the commutation relations (\ref{eq:commutation}). 
		Then $\{g_1',\dots,g_{n-k}'\}$ are a set of independent generators for the isotropic stabilizers of the underlying expanded Pauli frame.
		\item For each operator $g$ in   $\{I^{\otimes n}, g_{n-k+1},h_{n-k+1}\}\times \{I^{\otimes n},g_{n-k+2},h_{n-k+2}\}\times \cdots \times\{I^{\otimes n},g_n,h_n\}$, do the following:
		\begin{itemize}
			\item Measure $g$ on $\ket{\psi}$  $3200n$ times and obtain a probability distribution.
			\item Use Lemma~\ref{lemma:Pauli_measurement_T} to determine whether $g$ is a \ps.
			\item If a total of $2k$ \pss\ are found, halt.
		\end{itemize}
		We end up with $2k$ \pss\ $\{g_{n-k+1},h_{n-k+1},\dots,g_n,h_n \}$.
		\item Use Lemma~\ref{lemma:circuit_synthesis} to find a Clifford circuit $C$ such that $CX_iC^\dag= g_i$ for $i=1,\dots,{n-k}$ and $\{CX_jC^\dag,CY_jC^\dag\} = \{g_j, h_j\},\{g_j, -h_j\},\{-g_j, h_j\},$ or $\{-g_j, -h_j\} $ for $j={n-k+1},\dots,n$.
		\item  Let $\bfs\in\{0,1,2,3\}^n$ with $\bfs_i=0$ for $i=1,\dots, n-k$.  For $i={n-k+1},\dots,n$,
		\begin{itemize}
			\item if	$\{C^\dag g_i C,C^\dag h_i C\} = \{X_i, Y_i\} $, $\bfs_i=0$;
			\item if	$\{C^\dag g_i C,C^\dag h_i C\} = \{-X_i, Y_i\} $, $\bfs_i=1$;
			\item if	$\{C^\dag g_i C,C^\dag h_i C\} = \{-X_i, -Y_i\} $, $\bfs_i=2$;
			\item if	$\{C^\dag g_i C,C^\dag h_i C\} = \{X_i, -Y_i\} $, $\bfs_i=3$;
		\end{itemize}

		\item Output the circuit $\hat{U}=C\circ \left({I}^{\otimes n-k}\otimes T^{\otimes k}\right)\circ S^\bfs \circ H^{\otimes n} $.

	\end{enumerate}

	\caption{ {Learning  a $T$-depth one quantum circuit $U$  {of full $T$-rank} on the computational basis}} \label{Algorithm:II}
\end{algorithm}

\begin{theorem} \label{theo:T}
	Given access to an unknown $T$-depth one quantum circuit $U$  {of full $T$-rank}, one can learn a circuit description using  $O(3^k n)$ queries to the unknown circuit $U$ with time complexity  {$O(n^3+3^kn)$}, where $k\leq n$ is the number of $T$ gates, so that 	the produced hypothesis circuit $\hat{U}$ is equivalent to $U$ with probability at least
		$1-3e^{-n}$ when the input states are restricted to the computational basis. That is,
		\begin{align*}
			\hat{U}\ket{\bfv} =U\ket{\bfv}, \ \bfv\in\{0,1\}^n.
		\end{align*}
\end{theorem}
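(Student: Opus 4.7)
The plan is to verify that each step of Algorithm~\ref{Algorithm:II} succeeds with high probability and that the resulting circuit $\hat U$ agrees with $U$ on every computational basis input. First, for the identification of generators of the expanded Pauli frame of $\ket{\psi}=U\ket{0^n}$: by Lemma~\ref{lemma:BellMeasure3}, each Bell measurement on two copies of $\ket{\psi}$ returns an outcome $\bfr$ such that $\sigma_{\bfr\oplus\bfr_0}$, up to sign, is a stabilizer of every component of the underlying pseudomixture; Lemma~\ref{lemma:findbasis2} then guarantees that the $8n+1$ measurements in Step~2 produce $n+k$ independent generators except with probability at most $4^{-n}$.

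Next, I would confirm that Step~4 correctly discriminates the primary symplectic stabilizers from products of two or more of them. By Lemma~\ref{lemma:Pauli_measurement_T}, a single Pauli measurement on $\ket{\psi}$ yields outcome $+1$ with probability exactly $\tfrac{2+\sqrt{2}}{4}$ for a \ps, with strictly smaller probability $\tfrac{1}{2}(1+2^{-m/2})$ for a product of $m\ge 2$ such operators, and with probability $\tfrac{1}{2}$ for a non-stabilizer. A Hoeffding bound over the $3200n$ repetitions drives the chance of misclassifying any single operator below $e^{-cn}$ for an explicit constant $c>0$; a union bound over the at most $3^k$ candidate operators---polynomial in $n$ when $k=O(\log n)$---then keeps the total misclassification probability below a constant multiple of $e^{-n}$.

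Once the $n-k$ isotropic stabilizers $g_1,\dots,g_{n-k}$ and the $k$ primary symplectic pairs $\{g_{n-k+i},h_{n-k+i}\}$ are identified, Lemma~\ref{lemma:circuit_synthesis} returns in time $O(n^3)$ a Clifford $C$ realising the required conjugation pattern, and Table~\ref{table:expandedT} justifies the assignment of $\bfs$ in Step~6. Tracking $H^{\otimes n}\ket{0^n}=\ket{+^n}$ through $S^\bfs$, the $T$-stage, and $C$ shows that $\hat U\ket{0^n}$ is a $T$-depth one state with precisely the same expanded Pauli frame as $\ket{\psi}$; since the expanded Pauli frame determines the state up to a global phase, we obtain $\hat U\ket{0^n}=U\ket{0^n}$.

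The hard part will be extending this equality from $\ket{0^n}$ to an arbitrary computational basis vector $\ket{\bfv}$, since the algorithm only ever queries $U$ on $\ket{0^n}$. For the synthesized $\hat U$ the key identity is
\begin{equation*}
\hat U\ket{\bfv}=(C\,Z^{\bfv}\,C^{\dagger})\,\hat U\ket{0^n},
\end{equation*}
which follows from $H^{\otimes n}\ket{\bfv}=Z^{\bfv}\ket{+^n}$ and the fact that $Z^{\bfv}$ commutes with both $S^\bfs$ and the $T$-stage. I would then argue, via a canonical-form reduction of the unknown $U$ into the shape $C'(I^{\otimes n-k}\otimes T^{\otimes k})S^{\bfs'}H^{\otimes n}$, that $U$ satisfies the analogous diagonal-conjugation relation $U\ket{\bfv}=(C'Z^{\bfv}C'^{\dagger})U\ket{0^n}$ on the computational basis, so that matching on $\ket{0^n}$ propagates to $\hat U\ket{\bfv}=U\ket{\bfv}$ for all $\bfv\in\{0,1\}^n$. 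Summing the $O(n)$ Bell measurements, the $O(3^k n)$ Pauli measurements, and the $O(n^3)$ classical work of Steps~3 and~5 yields query complexity $O(3^k n)$ and total running time $O(n^3+3^k n)$; the stated failure bound $3e^{-n}$ follows from a union bound over the error sources above.
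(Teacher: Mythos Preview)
Your error-bound analysis for Steps~(2)--(4) of Algorithm~\ref{Algorithm:II} tracks the paper's proof closely: Lemma~\ref{lemma:findbasis2} for the $4^{-n}$ failure of Step~(3), a Chernoff/Hoeffding bound with gap $\delta=\tfrac{\sqrt{2}-1}{4}$ between a single \ps\ and a product of two in Step~(4), and a union bound over the at most $3^k$ candidates to reach the $3e^{-n}$ total.

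The gap is in your extension from $\ket{0^n}$ to an arbitrary $\ket{\bfv}$. Your identity $\hat U\ket{\bfv}=(CZ^{\bfv}C^\dag)\hat U\ket{0^n}$ for the synthesized circuit is correct, but the proposed ``canonical-form reduction'' of the unknown $U$ to $C'(I^{\otimes n-k}\otimes T^{\otimes k})S^{\bfs'}H^{\otimes n}$ is unjustified and generally false as an operator identity: a generic $T$-depth one circuit $U=C_2T^{\bfw}C_1$ has an arbitrary first Clifford stage $C_1$, not one of the shape $S^{\bfs'}H^{\otimes n}$. Asserting that such a form holds on the full computational basis is essentially the conclusion you are trying to prove, so the argument becomes circular. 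Even granting a canonical $C'$, you would still need $CZ^{\bfv}C^\dag$ and $C'Z^{\bfv}C'^\dag$ to coincide as operators on $U\ket{0^n}$ for every $\bfv$, which does not follow from $\hat U\ket{0^n}=U\ket{0^n}$ alone.

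The paper avoids any canonical form for $U$. It argues directly in the expanded stabilizer formalism: since the stabilizers of $\ket{\bfv}$ are $(-1)^{\bfv_i}Z_i$, both $U\ket{\bfv}$ and $\hat U\ket{\bfv}$ have the same \iss\ and \pss\ as $U\ket{0^n}=\hat U\ket{0^n}$ up to phases $\pm 1$, and the point is that these sign patterns agree because ``the evolution of $Z_i$ under $U$ is identical to that under $\hat U$.'' That is a statement about how the initial phases propagate through the Pauli-frame tracking of Lemma~\ref{lemma:3tok}, not an operator-level rewriting of $U$.
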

\begin{proof}
	By  Lemma~\ref{lemma:findbasis2}, we know that step (3) of Algorithm~\ref{Algorithm:II} fails with probability at most $4^{-n}$.
	In step (4) of Algorithm~\ref{Algorithm:II}, we have to determine whether an operator $g$ is a \ps\ or not. According to Lemma~\ref{lemma:Pauli_measurement_T}, we have to distinguish the probability distribution of measuring a \ps\ from the other three cases. 
	Assume that $g$ is a \ps\ for simplicity.
	The  distribution of measuring the product of two \pss\ has mean $\frac{3}{4}$ so that these two distributions have the smallest gap of $\delta=\frac{\sqrt{2}-1}{4}$. Let $R$ denote the average of the $3200n$ measurement outcomes of $g$ (see Step (4) of Algorithm~\ref{Algorithm:II}), hence $\E\{R\}=\frac{2+\sqrt{2}}{4}$ when $g$ is a \ps.
	Using the standard Chernoff bound, 
	\[
	\Pr\{ |R-\E\{R\}| \geq \delta/2\} \leq 2 e^{-3200n\times \delta^2/16}=2 e^{-2.14n}.
	\]
	We repeat this process for at most $3^k$ times. By the union bound, the error rate of step (4) is at most 
	\[
	3^k 2 e^{-2.14n}\leq 2 e^{-n}.
	\]	
	Therefore the total error rate is bounded by $4^{-n}+2 e^{-n} \leq 3e^{-n}$.

		Finally, it is clear that we have $\hat{U}\ket{0^n} =U\ket{0^n}$ by Algorithm~\ref{Algorithm:II}.
		In fact, for an arbitrary input state $\ket{\bfv}$ for $\bfv\in\{0,1\}^n$ in the computational basis,
		its stabilizers are $\{(-1)^{\bfv_i} Z_i, i=1,\dots,n\}$. Thus $U\ket{\bfv}$ and $U\ket{0^n}$  have the same {\iss} and {\pss} up to phases $-1$, and  $\hat{U}\ket{\bfv}$ and $\hat{U}\ket{0^n}$  have the same {\iss} and {\pss} up to phases $-1$. Since the evolution of $Z_i$ under $U$ is identical to that under $\hat{U}$ and the initial phases are unaffected in the evolution,
		we have  $\hat{U}\ket{\bfv}= U\ket{\bfv}$ for any $\bfv\in\{0,1\}^n$.
\end{proof}
 
	Next we comment on the task of learning an unknown quantum unitary $U$ that may apply on arbitrary input vectors.
	\bc \label{cor:tcircuit}
	Given access to an unknown $T$-depth one quantum circuit $U$  {of full $T$-rank}, one can learn a circuit description   using  $O(3^k n)$ queries to the unknown circuit $U$ with time complexity  $O(4^{3n})$, where $k\leq n$ is the number of $T$ gates,   so that the produced hypothesis circuit $\hat{U}$ is equivalent to $U$ up to a phase.
	\ec
	\begin{proof}		
		From the previous theorem, we are able to find a circuit $\hat{U}$ such that $\hat{U}\ket{\bfv}=U\ket{\bfv}$ for any $\bfv\in\{0,1\}^n$, using $O(3^{k} n)$ queries to $U$. 
		
		Now we can use $O(4^n)$ classical computations to compute  $U\ket{\bfv}\bra{\bfw}U^\dag$ for any $\bfv,\bfw\in\{0,1\}^n$. Consequently, by Lemma~\ref{lemma:2n}, $U$ can be uniquely determined using   $O(3^{k} n)$ queries to $U$ with   $O(4^n)$ classical computations. 
		 {Then we may use a Gaussian elimination in time $O(4^{3n})$ to find a circuit description.}
	\end{proof}

\be
Consider the following $T$-depth one circuit: 
\begin{figure}[h]
	\[
	\includegraphics[width=0.5\linewidth]{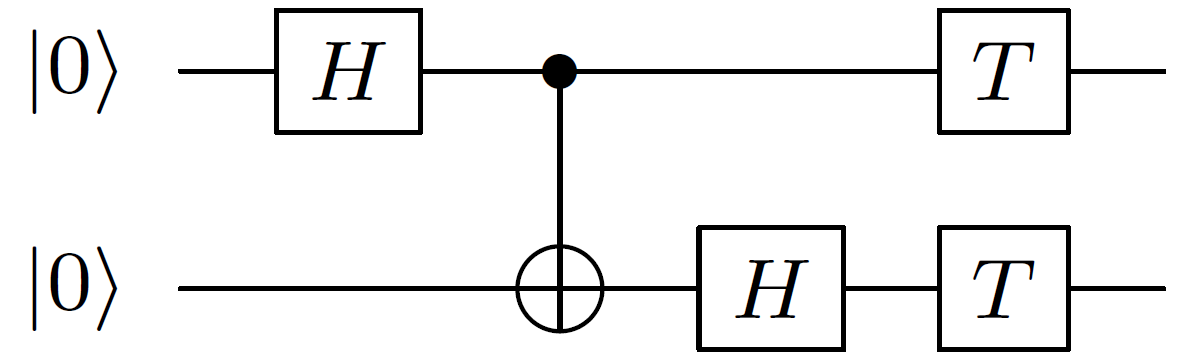}
	\]
 
\end{figure}

The output state $\ket{\psi}$ has an expanded Pauli frame generated by \pss\ $\{ZX, ZY, XZ,YZ\}$.
Possible Bell measurement outcomes on $\ket{\psi^*}\otimes\ket{\psi}$ and their respective Pauli measurement (with outcome $+1$) probabilities from Lemma~\ref{lemma:BellMeasure2} and Lemma~\ref{lemma:Pauli_measurement_T} are given in Table: 
\begin{table}[h]
		\resizebox{\columnwidth}{!}{%
		\begin{tabular}{|c|c|c|c|c|c|c|c|c|c|} %
			\hline
			$g$	& $II$ & $XX$&$XY$ &$XZ$& $YX$ &$YY$ &$YZ$ &$ZX$ &$ZY$\\
			\hline
			Bell meas.  & 1/4&1/16&1/16&1/8&1/16&1/16&1/8&1/8&1/8\\
			\hline
			Pauli meas. & $1/2$ & $3/4$ &$3/4$ & $\frac{2+\sqrt{2}}{4}$ & 3/4 &3/4 & $\frac{2+\sqrt{2}}{4}$ &$\frac{2+\sqrt{2}}{4}$ &$\frac{2+\sqrt{2}}{4}$ \\
			\hline
		\end{tabular}
}
\end{table} 

Similarly,  possible Bell measurement outcomes on $\ket{\psi}\otimes\ket{\psi}$  are given in the following table:
Since $(ZY)^*= -ZY$ and  $(YZ)^*= -YZ$, by Lemma~\ref{lemma:BellMeasure3} and Eq.\,(\ref{eq:sigma_r0}), the measurement outcomes are shifted by $ZX\cdot XZ =YY$, as can be seen from the two tables.
\begin{table}[h]
		\resizebox{\columnwidth}{!}{%
		\begin{tabular}{|c|c|c|c|c|c|c|c|c|c|} %
			\hline
			$g$	& $YY$ & $ZZ$& $ZI$ &$ZX$& $IZ$ &$II$ &$IX$ &$XZ$ &$XI$\\
			\hline
			 Bell meas.  & 1/4&1/16&1/16&1/8&1/16&1/16&1/8&1/8&1/8\\
			\hline
			  Pauli meas.  & $1/2$ & $3/4$ &$3/4$ & $\frac{2+\sqrt{2}}{4}$ & 3/4 &3/4 & $\frac{2+\sqrt{2}}{4}$ &$\frac{2+\sqrt{2}}{4}$ &$\frac{2+\sqrt{2}}{4}$ \\
			\hline
		\end{tabular}
	}
\end{table}

Assume that after performing Bell measurements  on $\ket{\psi}\otimes\ket{\psi}$ for many times, one determines two pairs of symplectic partners $\{YX,YZ\}$ and   $\{XY,ZY\}$. 
Thus we check all the operators in 
\begin{align*}
	&\{II,YX,YZ \}\times \{II,XY,ZY\}\\
	=&\{II, YX,YZ, XY,ZZ,ZX,ZY,XZ,XX\}
\end{align*}
and we can determine that $\{ZX,ZY\}$ and   $\{XZ,YZ\}$ are \pss\ by using Pauli measurements as in Lemma~\ref{lemma:Pauli_measurement_T}. 
The remaining steps are straightforward and are omitted. The output circuit is shown in the following figure. 
\begin{figure}[h]
	\[
	\includegraphics[width=0.95\linewidth]{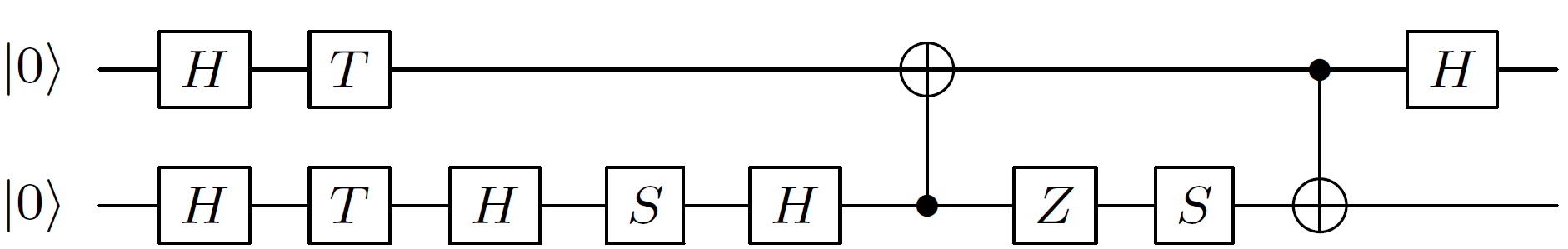}
	\]
\end{figure}

\eep

\be
Consider a $T$-depth one output state defined by \pss\ $g_1=YZ, g_2=ZY, h_1=-YX,h_2=-XY$,
where $g_i^*=- g_i$ and $h_j^*=-h_j$.
By Lemma~\ref{lemma:BellMeasure3} and Eq.\,(\ref{eq:sigma_r0}), 
the outcome distribution of Bell measurements on $\ket{\psi}\otimes \ket{\psi}$ is shifted by $\sigma_{\bfr_0}=g_1g_2h_1h_2=YY$ from the 
outcome distribution of Bell measurements on $\ket{\psi^*}\otimes \ket{\psi}$. 
\eep

In fact, the output state of a circuit of $T$-depth $n$ on input $\ket{0^n}$ may be learned by Algorithm~\ref{Algorithm:I}
as long as it is equivalent to a $T$-depth $1$ output state.
\be
The following circuit is of $T$-depth 2 
and the circuit output on input $\ket{0^n}$ can be learned by Algorithm~\ref{Algorithm:I}.
\begin{figure}[h]
	\[
	\includegraphics[width=0.65\linewidth]{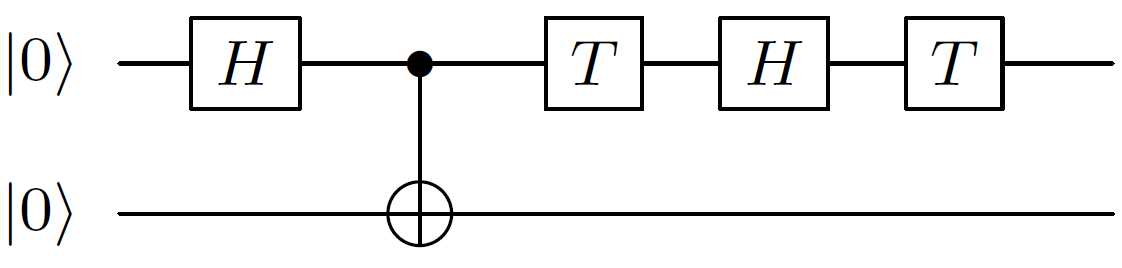}
	\]
\end{figure}
This can be understood as the circuit on input $\ket{00}$ is equivalent to the circuit in the previous example. 
\eep

\section{Discussions} \label{sec:conclusions}

In this paper, we studied the problem of learning unknown quantum circuits composed of Clifford circuits possibly combined with some $T$ gates.
When the underlying target circuit is an $n$-qubit Clifford $\mathsf{C}_n$, we can identify the unknown target by using $O(n^2)$ queries to $\mathsf{C}_n$.
The novelty of our result is that our algorithm exactly generates a circuit representation for the target,  rather than demonstrating the existence of a unique circuit corresponding to the query outcomes.
We emphasize this step is non-trivial since it requires a crucial method for circuit synthesis.
As a result, we are able  to predict the future output by sending an arbitrary input state to the proposed circuit.
Furthermore, our approach does not require querying the conjugate circuit $\mathsf{C}_n^\dagger$ since, to the best of our knowledge, no efficient way of generating such a conjugate circuit is known~\cite{MSM19}.

If the unknown target is a $T$-depth one quantum circuit of full $T$-rank and at most $O( \log n  )$ $T$ gates, we show that $O(n^2)$ queries to the circuit are sufficient to reconstruct the output state when the input is given by the all-zero state. The key ingredient is a novel expanded stabilizer formalism that allows us to trace the evolution of the expanded Pauli frame. Again, we do not require accessing to the conjugate circuit.

Whether any quantum circuit of $T$ depth-one (e.g.~with $k\leq n$ many $T$ gates) is efficiently learnable is still open. However, from our analysis, we do not expect that will be true.
Another interesting open problem is that how to obtain its circuit representation with an arbitrary  input state.
{One may also consider a formulation of PAC-learning these circuits on input the computational basis. The techniques of Bell measurements have to be modified appropriately.}

	Another potential direction is to use the stabilizer decomposition of a nonstabilizer state. In a classical simulation of quantum computation, one would like to minimize a possible representation of the quantum state and this leads to the use of a  representation with a low-stabilizer rank. 
	However, we do not know any measurement methods that can extract the component stabilizer states or their stabilizer generators. Even if one can retrieve the stabilizers using certain measurements, it is still difficult to regroup them so that we can identify each component stabilizer state because there is no constraint on the component stabilizer states.
	On the contrary, the stabilizer pseudomixture of a full-rank $T$-depth one circuit follows the algebraic structure we derived in the manuscript. This allows us to use Bell sampling to extract and regroup the stabilizers.
	Nonetheless, if one can develop a learning procedure with respect to the stabilizer decomposition, it would shed light on learning any nonstabilizer states.

	Finally, one may consider the applications of our learning algorithms to near term noisy quantum computers.
	Unfortunately, we believe that learning a noisy Clifford circuit would be computationally hard.
	That said, if learning a noisy Clifford circuit is not hard, then it implies that learning a noisy CNOT circuit is not hard, which in turn, implies that learning a parity function with noise (LPN) is not hard.
	However, we know that is not the case; LPN is believed to be hard~\cite{BKW03}.
	This does not mean that learning quantum states/circuits is not useful, but it reveals the nature that learning quantum circuits (so as learning certain classical circuits) is a highly technical demanding task.

\section*{ACKNOWLEDGEMENT}
CYL thanked Kai-Min Chung, Kao-Yueh Kuo, and Yingkai Ouyang  for helpful discussions.


\ifCLASSOPTIONcaptionsoff
  \newpage
\fi

\end{document}